\documentclass{amsart}
\usepackage{graphicx} 
\usepackage{url,multirow, subcaption}
\usepackage{amsmath, amsthm, amsfonts, enumerate}
\usepackage{comment}
\newtheorem{example}{Example}

\newtheorem{proposition}{Proposition}

\begin{document}

\author{David McCune}
\address{David McCune, Division of Analytical Sciences, William Jewell College}
\email{mccuned@william.jewell.edu}

\author{Jennifer Wilson}
\address{Jennifer Wilson, Department of Natural Sciences and Mathematics, The New School}
\email{wilsonj@newschool.edu}
\title[Can ranked-choice voting elect the least popular candidates?]{Can ranked-choice voting elect the least popular candidate?}

\begin{abstract}
We analyze how frequently instant runoff voting (IRV) selects the weakest (or least popular) candidate in three-candidate elections. We consider four definitions of ``weakest candidate'': the Borda loser, the Bucklin loser, the candidate with the most last-place votes, and the candidate with minimum social utility. We determine the probability that IRV selects the weakest candidate under the impartial anonymous culture and impartial culture models of voter behavior, and use Monte Carlo simulations to estimate these probabilities under several spatial models. We also examine this question empirically using a large dataset of real elections.  Our results show that IRV can select the weakest candidates under each of these definitions, but such outcomes are generally rare. Across most models, the probability that IRV elects a given type of weakest candidate is at most 5\%. Larger probabilities arise only when the electorate is extremely polarized.\end{abstract}

\keywords{ranked-choice voting, instant runoff voting, weak candidate, spatial models}

\maketitle

\section{Introduction}

The purpose of this article is to explore the titular question, ``can ranked-choice voting elect the least popular (or weakest) candidate?'' The question was posed to the first author in informal discussion by someone interested in alternative voting rules. In that context, ranked-choice voting referred specifically to instant runoff voting (IRV), also known as the Hare method, the alternative vote, etc., which is the only ranked voting rule used for some U.S. political elections. A natural response to the question is \emph{no}, because IRV does not elect the Condorcet loser (the candidate who loses every pairwise contest), which is a standard definition of a weakest candidate in the social choice literature. Nor does IRV elect the candidate with the fewest first-place votes, another intuitive measure of unpopularity. However, other reasonable definitions exist, such as the candidate with the lowest Borda score. This article investigates how often IRV may elect such weak candidates under several different definitions of ``weakest.''

We focus on four definitions: the candidate with the lowest Borda score (the Borda loser), the candidate ranked last under Bucklin voting (the Bucklin loser), the candidate who receives the greatest number of last-place rankings, and the candidate with the lowest social utility. All terms are defined in the next section. These measures are chosen because IRV can elect such candidates, and because each is a reasonable measure of unpopularity. We do not claim that any of these definitions provides a definitive measure of candidate weakness. For example, it is possible for a strong candidate such as the Condorcet winner to have the lowest social utility. Our goal is therefore not to identify a single correct notion of weakness, but to assess the extent to which IRV may select candidates who fare poorly under a variety of sensible metrics.

We restrict our discussion to elections with three candidates and consider how often, and under what circumstances, the IRV winner corresponds to the weakest candidate under these different notions of weakness. We restrict to the three-candidate case because if an election contains four or more candidates, the likelihood that the weakest can win under IRV (or under any reasonable voting method) dramatically decreases. That is, the three-candidate setting is arguably the most interesting case. Our analysis combines analytical, numerical, and empirical tools because each framework captures different features of voter behavior. Together they allow us to distinguish outcomes that arise from random preference aggregation, ideological structure, and real-world voting behavior. We show that when voters have complete, single-peaked preferences, IRV cannot elect either the Bucklin loser or the candidate receiving the most last-place votes. More generally, we analyze the probability that the IRV winner is weak under several  models of voter behavior. Using analytical methods, we derive exact probabilities under the impartial culture and impartial anonymous culture models. We also employ Monte Carlo simulations for multiple one- and two-dimensional Euclidean spatial models to derive approximations for these models.

Because real-world IRV elections often contain ballots that do not rank all candidates, we  incorporate partial preferences into some of our simulations by allowing a substantial fraction of voters to cast \emph{bullet votes}, on which only a single candidate is ranked. (With three candidates, the only other kind of partial ballot is one that identifies first and second-place candidates, which we equate with ballots that rank the remaining candidate in third-place.) Allowing bullet votes allows us to better assess how ballot truncation affects the probability that IRV elects a weak candidate. 

Additionally, since the data from observed elections contains truncated ballots, we facilitate comparison between simulation and empirical analysis by completing the bullet votes in the real-world election data using a proportional completion procedure. By analyzing both the original ballot data and the completed data, we arrive at a more  consistent comparison between our models and what is observed in real elections.

Our results generally show that it is possible for IRV to select the weakest candidate under all of our notions of weakness. The frequency with which a weak candidate is selected is usually quite low, especially in real-world elections. There are notable exceptions: for example, under some of our spatial models, IRV elects the candidate with the lowest social utility in over 20\% of simulated elections. Percentages of this magnitude are rare, with most falling between 0\% and 5\%. These findings indicate that while IRV can, in principle, select candidates who perform poorly under several reasonable measures of popularity, such outcomes depend strongly on the structure of voter preferences and appear to be uncommon in real-world elections. Rather than serving as a blanket critique of IRV, these findings clarify the circumstances under which concerns about weak winners are most relevant.

To our knowledge, no previous study has examined this exact question in depth. We hypothesize there are two main reasons for this gap. First, the most common formal definition of a weakest candidate is the Condorcet loser, and (as stated above) IRV cannot elect this candidate. There is a large literature examining how often various voting methods elect the Condorcet loser (see \cite{GL98, L93, NU86}, for example), and of course this work does not address IRV. Second, much social choice research is framed around selecting the ``best'' candidate rather than focusing on avoiding the worst. Studies ask how often methods pick the Condorcet winner or the Borda winner, or how different rules encourage candidate moderation, for example. Modern scholarship surrounding these questions dates to the pioneering work of researchers such as Kenneth Arrow \cite{Arrow} and Duncan Black \cite{B48}, which has inspired a vast subsequent literature in social choice. 

Our work connects to three strands of previous literature. The first  consists of work involving \emph{a priori} models such as the impartial culture (IC) and impartial anonymous culture (IAC) models. This work uses analytical techniques to calculate exact probabilities under these models or uses simulations. See \cite{GL17, ST} for examples of calculations of Condorcet efficiency or paradoxical outcomes under IC or IAC. We provide results under these models in Section \ref{section:analytical_results}, restricting to the case where all voters provide a complete ranking of the candidates. 

The second strand of literature uses single-peaked or spatial models to investigate the behavior of voting rules. The notion of single-peakedness dates to the work of Black \cite{B48}, while modern study of spatial models begins with the work of Enelow and Hinich \cite{EH84, EH90}. Moulin \cite{M80} studies strategic voting when voter preferences are single-peaked, and Kamwa et al. \cite{KMT23} analyze how single-peakedness affects the frequency of various voting paradoxes. The use of spatial models has increased in conjunction with the rise of modern computing power. Plassmann and Tideman \cite{PT14} use two-dimensional spatial models to investigate the frequency with which different voting rules exhibit paradoxes. Holliday and Pacuit \cite{HP24} use several $k$-dimensional Euclidean models to estimate the expected social utility of election winners under different voting methods. Atkinson et al. \cite{AFG24} and Robinette \cite{R23} use one-dimensional models to analyze the tendency of Condorcet methods and IRV to elect moderate candidates. In Section \ref{section:simulation_results} we provide results using several different spatial models, for the case where all voters provide a complete candidate ranking and for the case where some voters do not.

The third strand consists of a rapidly expanding empirical literature enabled by the growing availability of ranked ballot data from jurisdictions such as Australia, Scotland, and the United States. This literature examines the Condorcet efficiency of IRV and other voting rules such as the Borda count \cite{FB25,GSM23,Song,S24}, the tendency of IRV to exhibit monotonicity paradoxes \cite{GSM23,GSZ20}, issues of ballot exhaustion \cite{BK15,TUK23}, etc. In Section \ref{section:empirical}, we provide results using a large dataset of real-world political elections from Australia, Scotland, and the U.S.

The paper is structured as follows. Section \ref{section:prelims} provides all definitions, accompanied by examples. In Section \ref{section:models_datasets} we describe our models and real-world datasets. Section \ref{section:analytical_results} provides our analytical results, giving exact probabilities under the impartial culture and impartial anonymous culture models. In Section \ref{section:simulation_results} we give the results obtained from simulations under our models using both complete and partial preferences. Section \ref{section:empirical} analyzes real-world datasets. In Section \ref{section:discussion} we discuss our results, primarily explaining why our models produce the results they do. To provide context, we also compare IRV to the methods of Condorcet and plurality. The article concludes in Section \ref{section:conclusion}.

\section{Preliminaries}\label{section:prelims}

IRV, under the more colloquial name ``ranked-choice voting,'' has been widely endorsed in the U.S. as an alternative to plurality voting. Advocates claim that the use of ranked ballots leads to more consensus-building during the campaign and to the election of more representative winners. Empirical studies indicate that IRV is relatively immune to spoiler candidates and will frequently select the Condorcet winner, when such a candidate exists \cite{schwab, S24}. IRV has an additional advantage in that it can never elect the plurality or Condorcet loser\footnote{Depending on one's exact definition of IRV, it may be possible for IRV to elect the Condorcet loser when there are ties in the elimination process. When a tie is broken at random, which is our approach, IRV can never elect the Condorcet loser.}. However, in some instances IRV may elect candidates who are weak under other notions of weakness, which we define below.

We begin with an explanation of the algorithm used to elect a winner under IRV. In an IRV election, voters cast preference ballots with a (possibly partial) linear ranking of the candidates. After the election, the ballots are aggregated into a \emph{preference profile}, which shows the number of each type of ballot cast. For example, Table \ref{table:small_profile_ex} shows a preference profile involving  three candidates $A$, $B$, and $C$. In the first column the number 6 means that 6 voters ranked $A$ first, $B$ second, and $C$ third, denoted $A \succ B \succ C$.  In real-world elections,  voters often provide only a partial ranking, casting a ballot such as $A\succ B$. For our purposes this ballot is equivalent to the complete ballot $A\succ B \succ C$, and thus the only partial ballots of interest are \emph{bullet votes}, on which only one candidate is ranked.

An IRV election proceeds in rounds. In each round, the number of first-place votes for each candidate is calculated. If a candidate has a majority of the (remaining) first-place votes, they are declared the winner. If no candidate receives a majority of votes, the candidate with the fewest first-place votes is eliminated and the names of the remaining candidates on the affected ballots are moved up. The process repeats until a candidate is declared the winner. In the case when partial ballots are allowed, any ballot in which all candidates have been eliminated is considered ``exhausted'' and plays no further role in the determination of the winner.  

We illustrate this algorithm for the preference profile in Table \ref{table:small_profile_ex} as follows.

\begin{table}
\centering
\begin{tabular}{l|cccc}
&6&8&4&1\\
\hline
1st&$A$&$B$&$C$&$C$\\
2nd&$B$&$C$&$A$&$B$\\
3rd&$C$&$A$&$B$&$A$\\
\end{tabular}
\caption{An example of a preference profile with 19 voters.}
\label{table:small_profile_ex}
\end{table}

\begin{example}\label{first_ex}
 In the first round,  we determine the number of first-place votes received by each candidate. In this case,  candidates $A$, $B$ and $C$ receive 6, 8, and 5 votes respectively. Since no candidate receives a majority, the candidate with the fewest first-place votes (the \emph{plurality loser}), $C$, is eliminated and their votes are transferred to the second-ranked candidates. Thus the  4 votes corresponding to the ranking $C \succ A \succ B$ are transferred to $A$ and the single vote corresponding to the ranking $C \succ B \succ A$  is transferred to $B$.  As a result, $A$                wins the election in the second round   with 10 votes to $B$'s 9 votes.
\end{example}

When there are only three candidates, it takes at most two rounds to determine the IRV winner.  If a candidate receives a majority of first-place votes in round 1, they are the winner. If no candidate receives a majority of first-place votes, then after a candidate is eliminated   in round 2, one of the remaining candidates receives a majority of the remaining votes (assuming there is not a tie) and becomes the winner. Thus, the IRV winner may be the \emph{plurality winner} (the candidate with the most first-place votes initially) or, as illustrated in  Example \ref{first_ex},  the runner-up in the first-place vote count.  


Regardless of the number of candidates, IRV cannot elect certain kinds of weak candidates. IRV will never elect the plurality loser, since they are eliminated in round 1. IRV will also never elect the Condorcet loser, the most commonly-used definition of a weak candidate. 

 Condorcet  winners and losers are defined based on the results of paired comparisons between candidates,  and can be determined whenever ranked ballots are used.  In Example \ref{first_ex}, for instance, if candidates $A$ and $C$ held a head-to-head contest, $C$ would win over $A$, 13 votes to 6, since each of the voters in the three rightmost columns prefer $C$ to $A$. A candidate that wins each of their head-to-head contests against the other candidates is called a Condorcet winner; a candidate that loses each of their head-to-head contests is   called a Condorcet loser. Condorcet winners and losers do not always exist; however, when they do, IRV  often (but not always) selects the Condorcet winner. However, in the absence of ties, IRV will never select the Condorcet loser because if such a candidate makes it to the final elimination round, they  will lose in a head-to-head contest to the final other remaining candidate. 

To investigate how often IRV elects other types of weak candidates,  we adapt  notions based on alternative voting methods.  The \emph{Borda count} is a positional scoring method in which candidates receive points based on their position on the ballot. If there are $m$ candidates, a candidate receives $m-1$ points for being ranked first, $m-2$ points for being ranked second, and so on. A candidate's Borda score is the sum of their points across all voters, and the candidate with the largest sum is the Borda winner. Since a lower Borda score indicates that a candidate ranks poorly overall in voters' ballots, we use the notion of a Borda loser (the candidate with the lowest Borda score) as one way to identify a weakest candidate. 

Our second definition of weak candidate is based on  Bucklin voting.  In the case of three candidates, Bucklin voting proceeds  as follows. If a candidate has a majority of first-place votes, they are declared a winner. Otherwise the number of second-place rankings for each candidate are added to their number of first-place rankings. If one candidate's sum surpasses a majority of votes, then that candidate is the winner; if multiple candidates' sums surpass a majority,  then the winner is the candidate with the largest sum. In the rare case that a winner is not declared at this stage (which can occur if the election contains many bullet votes), then the winner is determined in the same manner after adding each candidate's first-, second-, and third-place rankings. In the early 20th century, Buklin voting was used for political elections in several U.S. cities \cite{B2000}.

The \textbf{Bucklin loser} of an election is the candidate who has the lowest sum when the  winner is declared. For example, if a candidate has a majority of first-place votes then the Bucklin loser is the candidate with the fewest first-place votes. If no candidate has an initial majority but there is a candidate whose sum of first- and second-place rankings surpass a majority then the Bucklin loser is the candidate with the smallest sum of first- and second-place rankings.

\begin{example} (Example 1 revisited)
The IRV winner in Example \ref{first_ex} is $A$. However, $A$ is also the weakest candidate according to  both of the above measures. The Borda point totals for $A$, $B$, and $C$ are 16, 23, and 18, respectively. $A$ receives the fewest points and is thus the Borda loser. Under Bucklin voting, no candidate earns a majority of first-place votes, and thus we add each candidate's first-place rankings to their second, resulting in point totals of 10, 15, and 13 for $A$, $B$, and $C$, respectively. Thus, $A$ is also the Bucklin loser.
\end{example}

Our third definition of weak candidate is the \textbf{candidate with the largest number of last-place votes}. If all voters have complete preferences then this definition coincides with the Bucklin loser when there is no majority candidate. However, somewhat surprisingly, it is possible for a candidate to receive a majority of first place votes while still receiving the largest number of last-place votes.


The last definition of weak candidate is based on the idea of minimal social utility and is appropriate for spatial models in which voters and candidates are identified by positions in a finite-dimensional Euclidean space. If $\mathbf{c}$ and $\mathbf{v}$ are the position vectors for a candidate and a voter, respectively, we define the \textbf{utility provided to the voter by the candidate} as the negative of the Euclidean distance between them; i.e. the utility is $-||\mathbf{v}-\mathbf{c}||=-\sqrt{\sum (v_i-c_i)^2}$. 
There are many ways  to aggregate this function to measure  social utility, as nicely outlined in \cite{HP24}. To limit our scope, we choose, arguably, the simplest and  most natural: $-\displaystyle\sum_{\mathbf{v}}||\mathbf{v}-\mathbf{c}||$. We define the \textbf{candidate of minimal social utility} as the candidate that minimizes this sum.

Identifying the candidate with minimal social utility requires more information about voter preferences than that provided by a preference profile like in Table \ref{table:small_profile_ex}. Hence  we cannot apply this definition of weak candidate to all models of voter behavior, nor to real-world ranked ballot data.


When voters are allowed to cast bullet votes, some of these notions of weak candidate must be modified.  The definition of the Bucklin loser translates in the natural way to this setting, and the notion of minimal utility does not depend on ballot lengths. It is less clear how to adapt the notion of most last-place votes to preference profiles that include partial ballots. Moreover, this calculation loses its relevance when most voters cast bullet votes, as sometimes happens in our real elections. Hence we consider this definition only for elections with complete ballots. 

For the Borda loser, we must decide how to allocate Borda points for bullet votes. Following previous literature \cite{B12,K22}, we adapt the Borda scoring in three ways. Under the {\it optimistic model} (Borda OM), candidates not listed are each awarded 1 point (as if each were ranked second). Under the {\it pessimistic model} (Borda PM), candidates not listed are each awarded 0 points (as if each were ranked last). Under the {\it average model} (Borda AVG), any candidates not listed are awarded the average number of points available for being ranked second or third ($\frac{1+0}{2} =0.5$). 

In sum,  we consider four definitions of weak candidate:

\begin{itemize}
\item \textbf{The Borda loser.} The candidate with lowest Borda score. In the case of partial ballots, we use the Borda AVG, Borda OM, and Borda PM losers.
\item \textbf{The Bucklin loser.} The candidate who is ranked last by the Bucklin voting rule.
\item \textbf{The candidate with the most last place votes.} We calculate this candidate only for the case of complete ballots. 
\item \textbf{The candidate of minimal social utility.} We calculate this candidate only for spatial models, as it cannot be determined solely from a preference profile.
    
\end{itemize}

Why focus on these four particular definitions of weakest candidate? While other notions of least popular are sensible, the four used here arise naturally from well-known voting rules and concepts from social choice theory. The Borda count is a well-studied voting method in the social choice literature, and a candidate's Borda score is often used as a measure of the candidate's strength or ``representativeness.'' Thus, the candidate with minimal Borda score is perhaps the most intuitive notion of weakness aside from the Condorcet loser. Another prominent voting rule is approval voting, in which voters approve as many candidate as they wish and the candidate with the most approvals wins. In this setting, the candidate with the fewest approvals is arguably the weakest. Our elections use ranked rather than approval ballots and thus we cannot directly measure who would receive the fewest approvals, but the nature of Bucklin voting (in which first and second preferences are weighted equally) mimics the spirit of approval voting, at least in the three-candidate case. Thus, the Bucklin loser is a good proxy for an approval voting loser. Furthermore, as mentioned above, Bucklin voting has been used for political elections in the U.S. (but is not currently used to the best of our knowledge). 

Since IRV cannot elect the candidate with the fewest first-place votes we do not consider this measure, but the candidate with the most last-place votes is a natural mirror image measure. Finally, social utility is a common notion in economics and has been studied in the social choice context \cite{HP24, M84}. Additionally, social utility uses the ideological structure of voter preferences in ways that go beyond pure ranking information, which  allows for additional analysis of how weak candidates can be elected by IRV.  Thus, minimal social utility provides a useful complement to the three other measures.

Before beginning our analysis, we conclude this section with an observation. If voter preferences are {\emph single-peaked}, as they often are in  one-dimensional spatial models, then it is impossible for IRV  to elect either the Bucklin loser or the candidate with the most last-place votes.

The notion of single-peakedness dates to \cite{B48}. It assumes that candidates can be ordered along a horizontal axis in such a way that if each voter's preferences are graphed along a vertical axis, then each graph has a single local maximum. Informally, voter preferences are single-peaked if every voter agrees on which candidate is leftmost, which candidate is second leftmost, etc., and each voter's preferences respect these candidate placements. A consequence of single-peaked preferences is that not all candidate rankings are possible. With three candidates, if the candidates are ordered left-to-right,   $A \succ B \succ C$, then it is not possible for voters to have the rankings $A \succ C \succ B$ or $C \succ A \succ B.$
The one-dimensional spatial models we use are single-peaked, but the two-dimensional models are not.

\begin{proposition}
If voter preferences are single-peaked and voters cast complete ballots then the IRV winner cannot be the Bucklin loser, and cannot receive the most last-place votes.
\end{proposition}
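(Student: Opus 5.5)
Order the candidates left to right as $A$, $B$, $C$. Single-peakedness then leaves only four admissible rankings. Let $x_1$ be the number of $A\succ B\succ C$ ballots, $x_2$ of $B\succ A\succ C$, $x_3$ of $B\succ C\succ A$, and $x_4$ of $C\succ B\succ A$, with $n=x_1+x_2+x_3+x_4$. The plan is to compute everything in these coordinates. First-place totals are $x_1$, $x_2+x_3$, and $x_4$ for $A$, $B$, $C$. Last-place totals are $x_3+x_4$ for $A$, $0$ for $B$, and $x_1+x_2$ for $C$. I will take ``the IRV winner'' and ``the loser'' in the strict sense: elimination ties are broken at random as in the paper, and a loser is a strict extreme, or at least shares the extreme with another candidate. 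Where ties could matter I would remark that the conclusion holds except in knife-edge tied profiles.

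\textbf{Reduction of the Bucklin claim to the last-place claim.} With complete ballots, a candidate's first-plus-second total is $n$ minus their last-place total. So if no candidate has a first-place majority, the Bucklin loser is exactly the candidate with the most last-place votes, as the paper already notes. If some candidate has a first-place majority, that candidate is both the IRV winner and the Bucklin winner. The Bucklin loser is then the plurality loser, whom IRV never elects. Hence it suffices to show that the IRV winner never has the most last-place votes, in both the majority and the no-majority cases.

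\textbf{Main step.} The middle candidate $B$ has no last-place votes, so the candidate with the most last-place votes is an extreme candidate; by the symmetry $A\leftrightarrow C$ we may take it to be $A$. That gives $x_3+x_4\ge x_1+x_2$, so $x_1\le n/2$. Now suppose $A$ is the IRV winner. There are three ways this can happen.
\begin{itemize}
\item \emph{$A$ wins in round one.} This requires $x_1>n/2$, contradicting $x_1\le n/2$.
\item \emph{$C$ is eliminated and $A$ beats $B$ in the final.} The final totals are $A$ with $x_1$ and $B$ with $x_2+x_3+x_4$, so $A$ winning again needs $x_1>n/2$, the same contradiction.
\item \emph{$B$ is eliminated and $A$ beats $C$ in the final.} The final totals are $A$ with $x_1+x_2$ and $C$ with $x_3+x_4$. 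Winning requires $x_1+x_2>x_3+x_4$, contradicting $A$ having the most last-place votes.
\end{itemize}

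The key observation, and really the only nontrivial point, is this. Under single-peakedness, the head-to-head contest between the two extreme candidates is decided exactly by comparing their last-place counts. Every ballot ranks one of $A$ or $C$ last, and that ballot prefers the other extreme. I expect the main obstacle to be only bookkeeping of ties, for example $x_1+x_2=x_3+x_4$ with a random tie-break in the final round. I would handle this by the convention stated above or by noting that such ties are excluded.
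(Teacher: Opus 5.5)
Your proof is correct. Its core computation is exactly the case analysis in the paper:
\begin{itemize}
\item if $C$ is eliminated, $B$ collects $n-x_1\ge n/2$ votes;
\item if $B$ is eliminated, the final between the two extremes is decided by their last-place counts.
\end{itemize}
What differs is how the two claims are organized.

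The paper proves the Bucklin claim directly by this case analysis, simply asserting that no candidate can then have a first-place majority. It handles the last-place claim separately with a conceptual one-liner: $B$ is never ranked last, so the candidate with the most last-place votes holds a majority of them. That candidate therefore loses every pairwise contest and is the Condorcet loser, whom IRV never elects.

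You reverse this. You prove the last-place claim by the case analysis. You then obtain the Bucklin claim as a corollary, using the identity that a candidate's first-plus-second total is $n$ minus their last-place total. You also treat the majority case explicitly, where the Bucklin loser is the plurality loser.

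Your route is fully self-contained, and it supplies the justification for the no-majority step that the paper leaves implicit. The paper's route is shorter for the last-place part and names the underlying reason. Your three cases in effect re-derive \emph{IRV never elects the Condorcet loser} in this single-peaked setting, since the candidate with the most last-place votes loses to the other extreme on last-place counts and to $B$ because $x_1\le n/2$.

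Both arguments treat ties in the same informal way. The paper's strict inequalities, such as $a_1+b_1<b_2+c_1$ and a strict majority of last-place votes, tacitly exclude the knife-edge profiles you flag.
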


\begin{proof}
Suppose the preference profile has the form shown in Table \ref{single-peaked}, and suppose that IRV selects the Bucklin loser. Then no candidate receives a majority of first-place votes. Moreover, since the sums of the first and second-place votes are  $a_1+b_1$, $a_1+b_1+b_2+c_1$, and $b_2+c_1$ for $A$, $B$, and $C$,  respectively, the Bucklin loser must be either $A$ or $C$. Suppose, without loss of generality, that $A$ is the IRV winner and the Bucklin loser. Then in the first round of IRV, either  $B$ or $C$ must be eliminated. If $C$ is eliminated then $B$ receives a total of $b_1+b_2+c_1$ votes, which must be a majority of votes cast (because $A$ does not have a majority initially). This implies that $B$ defeats $A$, which is impossible. If $B$ is eliminated, then in round 2, $A$ receives  $a_1+b_1$ votes  while $C$ receives $b_2+c_1$ votes. But $A$ is the Bucklin loser implying $a_1+b_1<b_2+c_1$, and hence $C$ defeats $A$, which is also impossible. Thus, $A$ cannot be both the IRV winner and the Bucklin loser.  
\begin{table}
\centering
\begin{tabular}{l|cccc}
&$a_1$&$b_1$&$b_2$&$c_1$\\
\hline
1st &$A$&$B$&$B$&$C$\\
2nd & $B$&$A$&$C$&$B$\\
3rd & $C$&$C$&$A$&$A$\\
 \end{tabular}

 \caption{An example of a single-peaked preference profile}
 \label{single-peaked}
  
  \end{table}

Next, suppose that IRV selects the candidate with the most last-place votes. Since $B$ receives no last-place votes, this means that IRV selects the candidate with a majority of last-place votes. But  this is impossible, since any candidate with a majority of last-place votes will lose all of its head-to-head contests, making it the Condorcet loser. \end{proof}

Elections with single-peaked preferences also have the characteristic that a Condorcet winner always exists-- 
the candidate closest to the median voter. In fact, assuming an odd number of voters,  it is always possible  to rank candidates based on the number of head-to-head contests they win. Thus, with three candidates, there is always a Condorcet winner who wins against both of the other candidates, a second candidate who wins  one head-to-head contest, and a Condorcet loser who does not win any head-to-head contests. Since IRV never elects the Condorcet loser, the IRV winner will either be the Condorcet winner or the candidate who wins over the Condorcet loser.  Instances where the latter occurs often correspond to situations in which the IRV winner is weak in one or more senses. We discuss this further in Section \ref{section:discussion}.

\section{Models and Datasets}\label{section:models_datasets}
To investigate the frequency with which IRV elects the least popular candidate, we use a  suite of models of voter behavior, together with a large collection of real-world ranked-choice elections. In this section we describe the models and data in detail.

\subsection{Description of Models}
 
  The first two models are  theoretical, \emph{a priori} models from the classical social choice literature.  They are constructed by placing a probability distribution on the set of possible ballots, using random selection in a natural way.  For both of these models, we assume that voters cast complete ballots. Thus, with three candidates, there are six possible rankings.  

The \textbf{impartial culture (IC) model} assumes each voter chooses randomly and uniformly from among all of the six possible rankings. Informally, each voter chooses a ranking by rolling a standard 6-sided die. Aggregating these rankings creates a profile like that in Table \ref{preference_profile}.  

The \textbf{impartial anonymous culture (IAC) model} assumes the preference profile is selected randomly and uniformly from among all possible preference profiles of a fixed size. Using the notation from Table \ref{preference_profile}, this means that if there  are $V$ voters, then the IAC model is generated by choosing at random a 6-tuple $(a_1, a_2, b_1, b_2, c_1, c_2)$, such that
$$a_1+a_2+b_1+b_2+c_1+c_2=V.$$

 \begin{table}
\centering
\begin{tabular}{l|cccccc}
&$a_1$&$a_2$ & $b_1$ &$b_2$ & $c_1$ &$c_2$\\
\hline
1st&$A$& $A$&$B$&$B$&$C$&$C$\\
2nd&$B$&$C$&$A$&$C$&$A$&$B$\\
3rd&$C$&$B$&$C$&$A$&$B$&$A$\\
\end{tabular}
\caption{A generic 3-candidate preference profile with complete ballots.} 
\label{preference_profile}
\end{table}

For both the IC and IAC models, there are analytical techniques that allow us to calculate the probabilities that specific events occur as $V \rightarrow \infty$. Thus, in  Section \ref{section:analytical_results}, where we discuss the IC and IAC models, we assume an arbitrarily large electorate.

We include the IC and IAC models in our analysis because they are prominent in the social choice literature, and they provide useful theoretical baselines for how IRV might behave in competitive elections generated from randomness. The IC model in particular generates elections that are extremely close in the sense that each possible ranking appears almost as frequently as any other. Consequently, it can be viewed as a ``worst-case'' environment in which unusual election outcomes are relatively likely to occur. The IAC model serves a similar role while allowing exact probability calculations using tools from geometric and combinatorial analysis. By comparing results from IC and IAC to those obtained from spatial models and real-world elections, we can better distinguish phenomena that arise from randomness from those that are driven by ideological structure or observed voting behavior. 

To complement these  \emph{a priori} models, we also consider spatial models that incorporate ideological structure among voters and candidates. Such models are useful because, unlike the IC and IAC models, or data obtained from real elections, spatial models allow for an analysis of social utility. We define three different kinds of one-dimensional spatial models, and build several different classes of two-dimensional models from the one-dimensional ones. Each model uses 4001 voters and voter preferences are constructed using Euclidean distance. That is, the 4001 voters and 3 candidates are placed in space, and each voter ranks the candidates based on their distance from the candidate. The candidate closest to them is ranked first, followed by the candidate next closest to them, and so on.  We use the number 4001 simply because it is large, mirroring the limiting probabilities we compute for the IC and IAC models.

Because analytical techniques do not exist for these spatial models, we use Monte Carlo simulations, discussed more thoroughly in Section \ref{section:simulation_results}, to investigate  how frequently specific events occur.

The spatial models differ in their dimension and in the probability distribution used to place the voters and the candidates.  We use the following probability distributions.
\begin{itemize}
\item \textbf{Unimodal, denoted UNI}. Voters and candidates are placed along the real line using the standard normal distribution.
\item \textbf{Bimodal with parameter} $\mathbf{\sigma}$, \textbf{denoted BIM}$(\sigma)$. For each $\sigma \in \{0.25, 0.26, \dots, 0.70\}$, voters and candidates are placed along the real line using a distribution which is a mixture of a normal distribution centered at $-1$ and a normal distribution centered at 1, each with standard deviation $\sigma$. The two normal distributions are weighted equally.
\item \textbf{Weighted bimodal with parameter} $\mathbf{\sigma}$, \textbf{denoted WBI}$(\sigma)$. This model is identical to BIM$(\sigma)$, except that 60\% of voters are drawn from the left distribution and 40\% from the right. This model allows for an analysis of polarized electorates where one wing of the electorate is larger than the other.
\end{itemize}

We include models such as BIM$(\sigma)$ and WBI$(\sigma)$ to capture varying degrees of political polarization, a prominent feature of contemporary American and international politics. This allows us to investigate the effect of polarization on IRV's propensity to elect the least popular candidate, and theoretical models like IC and IAC are not appropriate for this purpose.

Figure \ref{figure:1D_models} shows sample voter distributions under each model in the one-dimensional case. As $\sigma$ decreases, the electorate becomes more polarized, as illustrated for both BIM and WBI in Figure~\ref{figure:1D_models}. As $\sigma$ increases, the distributions approach a unimodal normal shape, though WBI remains skewed. Since we already include the model UNI, we cut off the $\sigma$ values at $0.70$.

We construct two-dimensional models by assigning coordinates in $\mathbb{R}^2$ to each voter and candidate using two one-dimensional models. Given one-dimensional models $X$ and $Y$, we denote the corresponding two-dimensional model by $X \times Y$. To generate such a model, we draw $x$-coordinates according to the distribution specified by $X$ and $y$-coordinates according to the distribution specified by $Y$, then combine these coordinates to form points in $\mathbb{R}^2$. For example, the model BIM$(\sigma)\times$WBI$(\sigma)$ generates $x$-coordinates from the bimodal distribution BIM$(\sigma)$ and $y$-coordinates from the weighted bimodal distribution WBI$(\sigma)$. See  Figure~\ref{figure:2D_models} for examples of voter distributions produced by some of the 2D models. The resulting electorates exhibit varying degrees and forms of polarization depending on the underlying one-dimensional distributions and may contain multiple dense ideological clusters. For example, models such as BIM$(\sigma)\times$BIM$(\sigma)$ produce electorates concentrated in two polarized clusters, while other combinations may generate more complex structures. We study such models because our primary interest is understanding how ideological polarization affects the tendency of IRV to elect weak candidates. As demonstrated in Section  \ref{section:simulation_results}, two-dimensional polarized electorates sometimes produce much different results than one-dimensional polarized electorates. Voter preferences are then determined by Euclidean distance to candidates, as in the one-dimensional case.

As mentioned previously, $k$-dimensional Euclidean models have been widely used in the social choice literature. Our specific models are inspired by those used in \cite{HP24}. Most previous work with spatial models assumes voters provide complete preferences; we run simulations where this is the case but also include complementary simulations where some voters cast bullet votes. This allows us to separate effects driven by preference structure from those driven by ballot truncation, which is common in real-world IRV elections. To generate simulated elections with partial preferences, we give each voter a probability of 0.35 of casting a bullet vote. We choose this number to match the ``typical'' rate of bullet votes in American ranked-choice elections, a dataset described below. The exact methodology for the simulations is described in Section \ref{section:simulation_results}.

\begin{figure}
\begin{tabular}{cc}
\includegraphics[width=76mm]{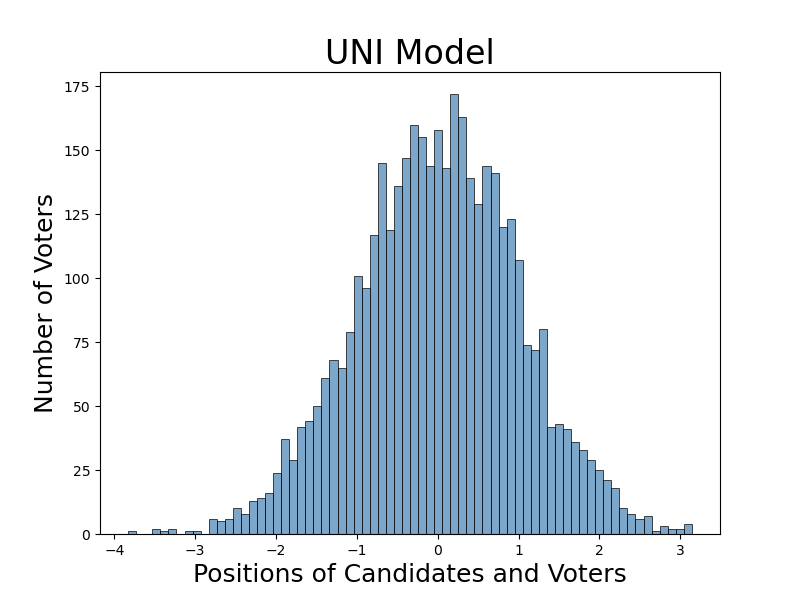} &\includegraphics[width=76mm]{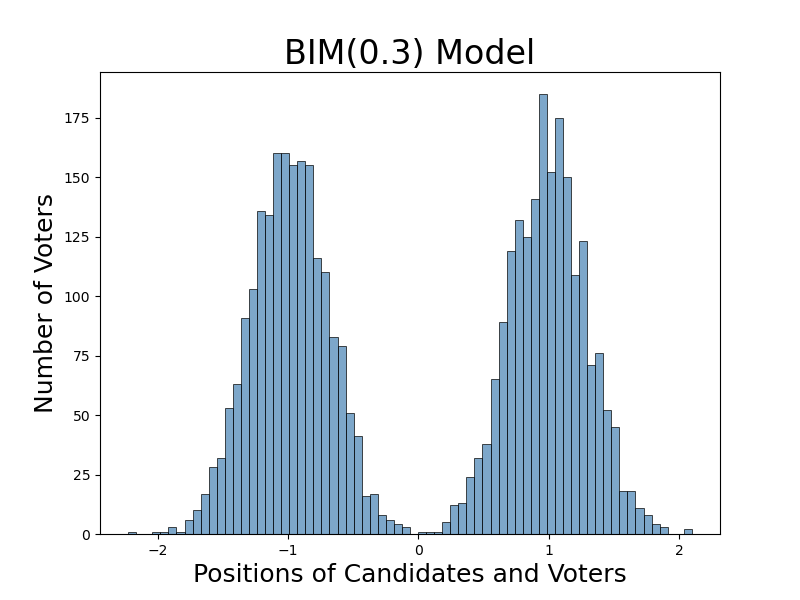} \\

\includegraphics[width=76mm]{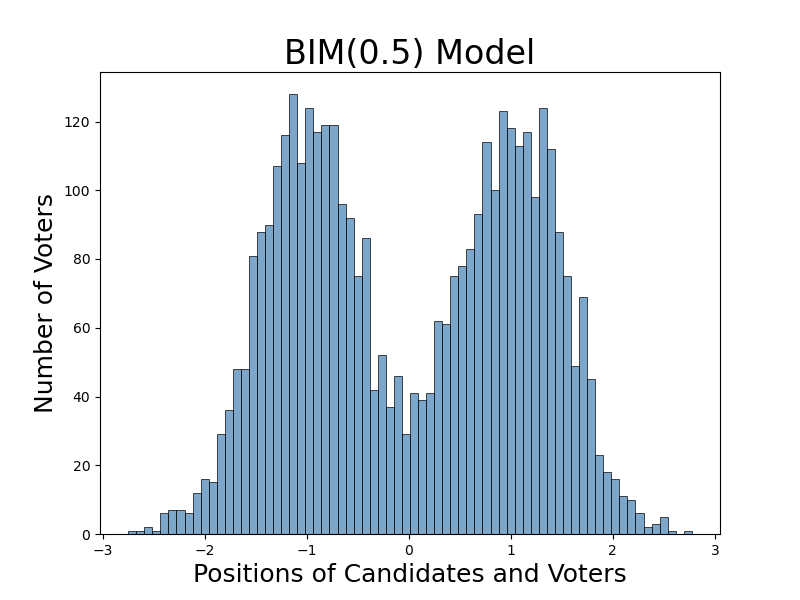}&\includegraphics[width=76mm]{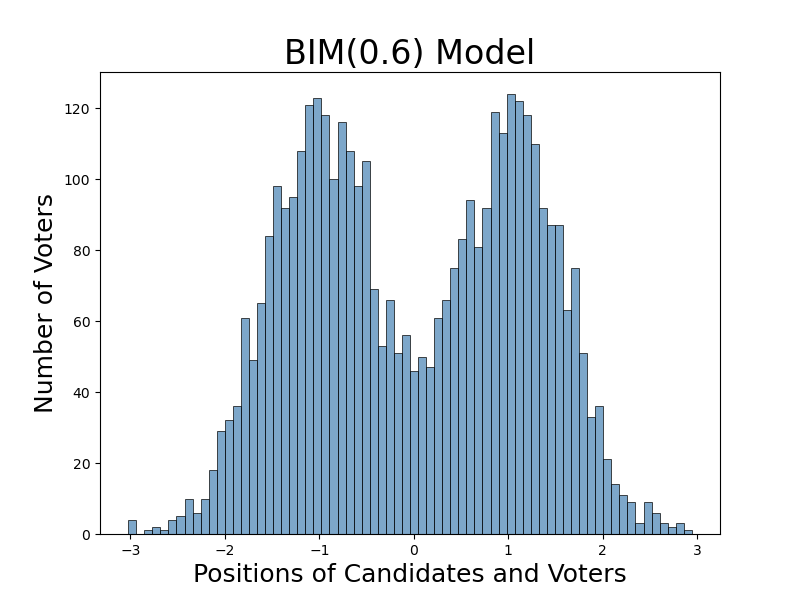}\\

\includegraphics[width=76mm]{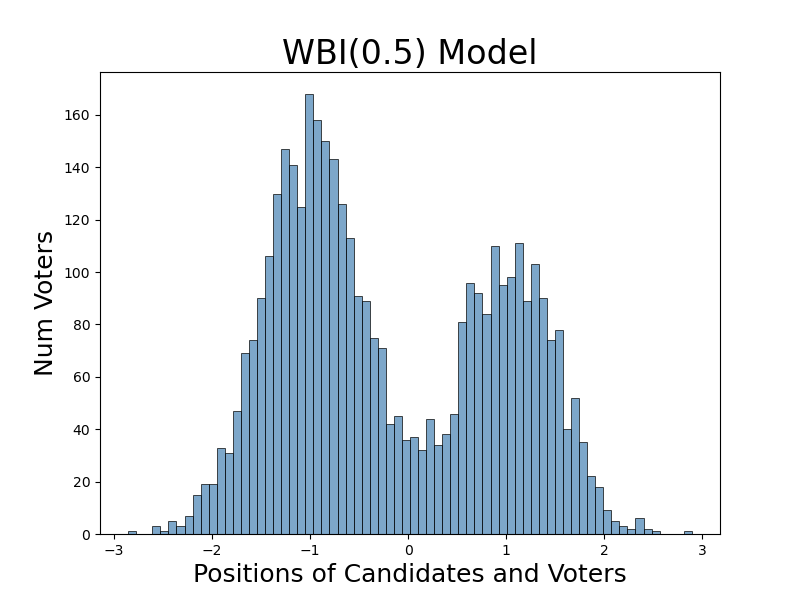}&\includegraphics[width=76mm]{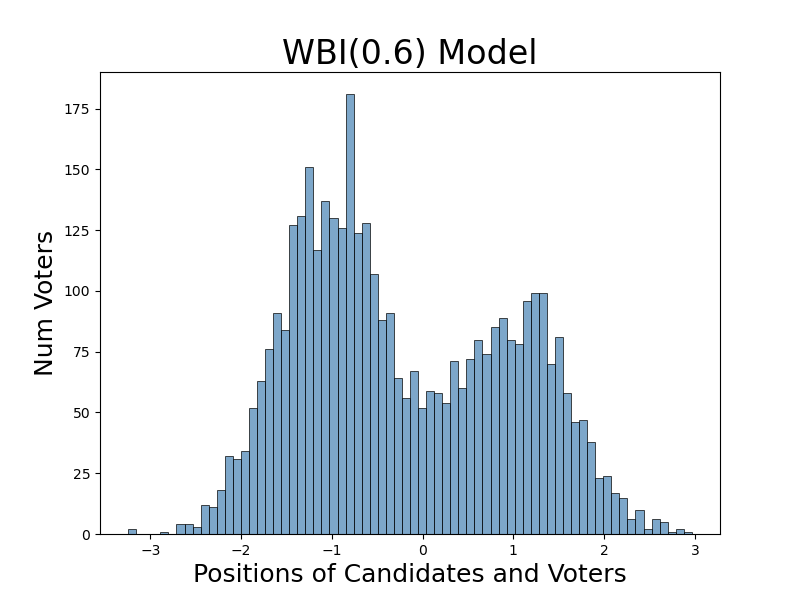}\\

\end{tabular}
\caption{Examples of one-dimensional spatial voter distributions under the UNI, BIM($\sigma$), and WBI($\sigma$) models, illustrating increasing polarization as $\sigma$ decreases.}
\label{figure:1D_models}
\end{figure}

\begin{figure}
\begin{tabular}{cc}

\includegraphics[width=76mm]{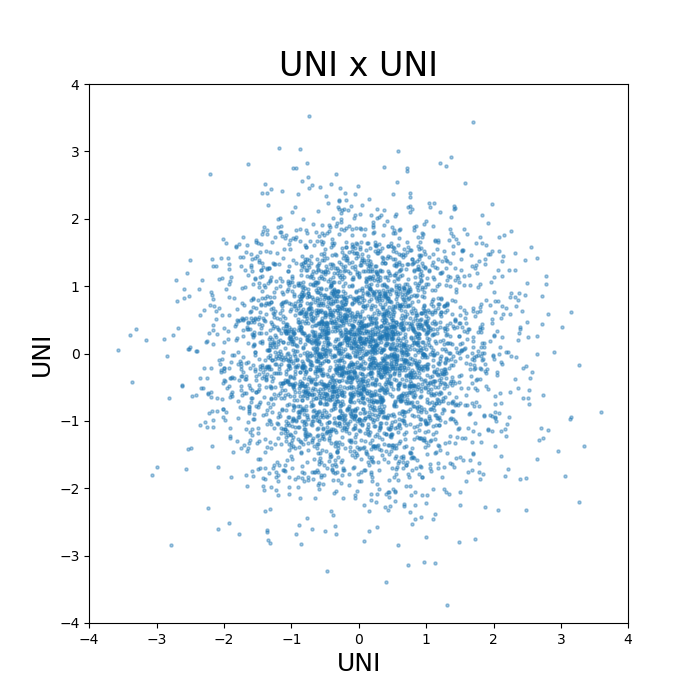}&\includegraphics[width=76mm]{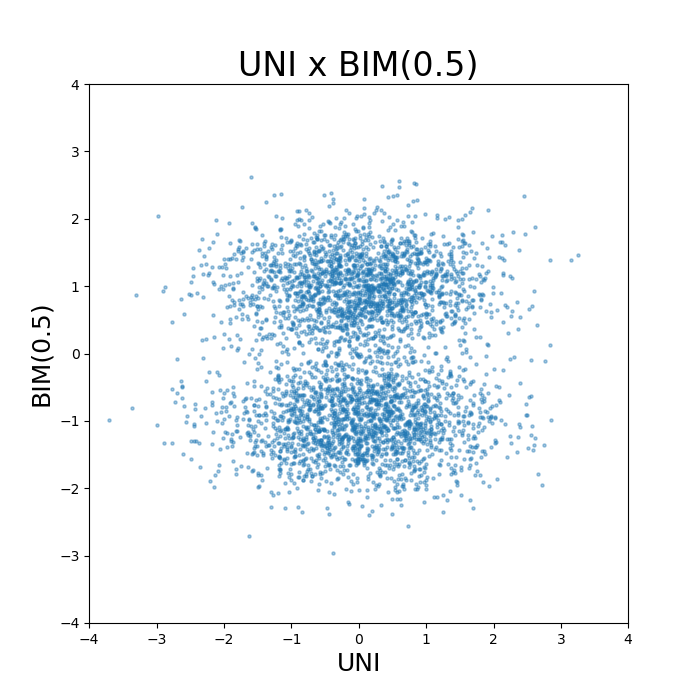}\\

\includegraphics[width=76mm]{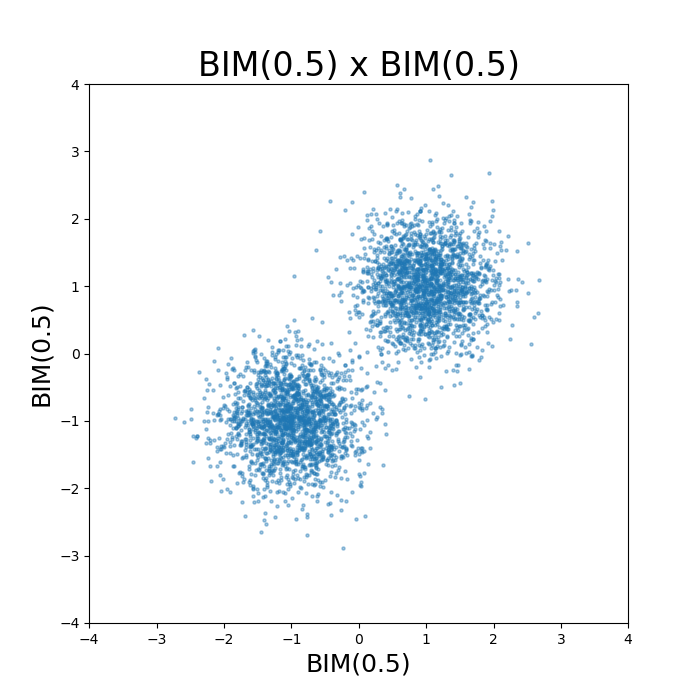}&\includegraphics[width=76mm]{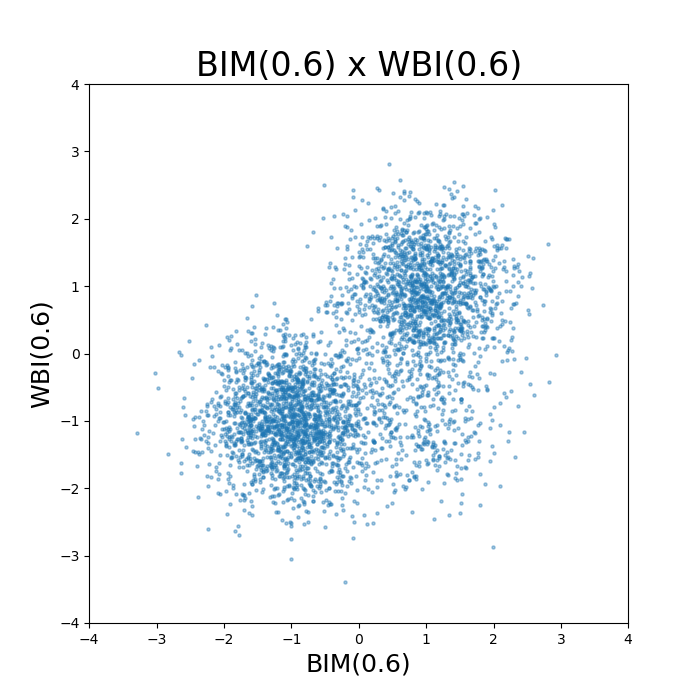}\\

\end{tabular}
\caption{Examples of four 2D spatial models.}
\label{figure:2D_models}
\end{figure}

In addition to simulated elections generated from our models, we analyze four real-world datasets drawn from \cite{schwab}. We also supplement these datasets with 27 ranked elections from the American Psychological Association (APA). Some of these APA elections were analyzed in \cite{PPR}, and others were provided directly to the first author for \cite{MM24}. The datasets we use are described below.

\begin{itemize}
 
\item \texttt{Australia}. This dataset consists of 325 single-winner political elections from Australia, primarily local elections in New South Wales. It was first analyzed in \cite{S24}.
\item \texttt{Scotland}. This dataset consists of 1128 local government elections from Scotland. Most of these elections are multiwinner, where the method of single transferable vote (a multiwinner proportional version of IRV) is used to elect 3 or 4 winners. The ballot data for 1100 of these elections is available at \url{https://github.com/mggg/scot-elex}. The data for the remaining 28 elections are available by request (these elections are off-cycle by-elections). This dataset has been analyzed in (\cite{BBMP24,BBDGW21,GSJM24,MGS24,Mollison23}).
\item \texttt{Scot\_condensed}. The dataset \texttt{Scotland} consists mostly of multiwinner elections, and ranked ballot data from multiwinner contests looks fundamentally different than data from single-winner elections. For example, in multiwinner elections parties often run multiple candidates in an attempt to win multiple seats. To make the Scottish dataset look more ``single-winner-like,'' we use the dataset \texttt{Scot\_condensed}, which was created for \cite{schwab}. For each election in \texttt{Scotland}, exactly one candidate per party was retained on each ballot, with all other candidates from that party removed. The candidate to retain was chosen as the party’s highest-ranked candidate under a Borda scoring rule, reflecting the candidate with the broadest within-party support.
\item \texttt{USA}. This dataset consists of the ballot data for 401 American ranked-choice elections compiled from \cite{L22a,L22b,MM24,O22}. Most are municipal elections in cities such as Minneapolis, Oakland, and New York City, though the dataset also includes federal elections in Alaska and Maine, as well as 27 elections from the APA. We restrict to elections which contain at least three official (not write-in) candidates. Elections from this dataset have been analyzed in \cite{FB25, GSM23,HOAG24,MM24,MW}.
    
\end{itemize}

Most elections in these datasets contain more than three candidates and therefore do not directly fit our analysis. The number of elections with exactly three candidates (excluding write-in candidates) is: 6 in \texttt{Australia}; 3 in \texttt{Scotland}; 43 in \texttt{Scot\_condensed}; and 133 in \texttt{USA}, for a total of only 185 elections.
To incorporate elections with more than three candidates, we apply the IRV algorithm to each election until only three candidates remain, and then analyze the resulting preference profile. This mirrors the elimination process inherent to IRV and preserves the relative ordering information among the remaining candidates. Such a procedure has drawbacks: for example, the resulting profiles tend to contain more bullet votes than are observed in actual three-candidate elections. Nonetheless, given our restriction to three candidates, this approach provides a reasonable way to leverage the full set of available data. 

To provide a direct comparison to our simulation results that assume complete voter preferences, we also analyze real-world preference profiles where bullet votes are extended to complete preferences using a proportional methodology. The details are described in Section \ref{section:empirical}.

\section{Analytical Results}\label{section:analytical_results}

In this section, we determine the probability that IRV selects a weak candidate under the IAC and IC models. We consider only complete preferences and, since neither model is spatial, confine the analysis to the frequency of selecting the Borda loser, Bucklin loser, or candidate with the most last-place votes.

To determine these probabilities, we identify the algebraic conditions under which one of these events can occur. Using the notation introduced in Table \ref{preference_profile}, we assume without loss of generality that the plurality scores rank the candidates $A \succ B \succ C$.

Consider first the possibility that the IRV winner is also the Borda loser. We claim that the IRV winner cannot have a majority of first-place votes. To see this, suppose $A$ has a majority of first-place votes and the smallest Borda score. Then
\begin{align*}
a_1+a_2 +\frac{1}{2}(b_1+c_1) &< b_1+b_2 +\frac{1}{2}(a_1+c_2)\mbox{ and }\\
a_1+a_2 +\frac{1}{2}(b_1+c_1) &< c_1+c_2 +\frac{1}{2}(a_2+b_2).
\end{align*}
Summing these inequalities, we obtain
$$2(a_1+a_2)+b_1+c_1< b_1+b_2+c_1+c_2+\frac{1}{2}(a_1+a_2+b_2+c_2),$$
which is equivalent to
$$a_1+a_2<b_2+c_2,$$
which proves the claim.

Thus, there are two ways in which the IRV winner and Borda loser coincide, corresponding to the two sets of inequalities (\ref{a_wins}) and (\ref{b_wins}) below: (\ref{a_wins}) $A$ does not have a majority, $A$ is the IRV winner, and $A$ is the Borda loser; and (\ref{b_wins}) $B$ is the IRV winner and the Borda loser. Therefore, the IRV winner and Borda loser are equal if and only if $a_1+a_2>b_1+b_2>c_1+c_2$ and additionally, either:

\begin{align}
&\left\{ \begin{array}{rl}
a_1+a_2+c_1  &>b_1+b_2+c_2 \mbox{ and} \\
a_1+a_2 + \frac{1}{2}(b_1+c_1)
&< \min\{ b_1+b_2 + \frac{1}{2}(a_1+c_2),
          c_1+c_2 + \frac{1}{2}(a_2+b_2)\}
\end{array}\right.
\label{a_wins}\\[1ex]
&\mbox{or}\nonumber \\[1ex]
&\left\{ \begin{array}{rl}
a_1+a_2+c_1 &<b_1+b_2+c_2 \mbox{ and} \\
b_1+b_2 + \frac{1}{2}(a_1+c_2)
&< \min\{ a_1+a_2 + \frac{1}{2}(b_1+c_1),
          c_1+c_2 + \frac{1}{2}(a_2+b_2)\}
\end{array}\right.
\label{b_wins}
\end{align}

Unlike with the Borda loser, it is possible for the IRV winner to have a majority of first-place votes and also receive the largest number of last-place votes. Thus, the IRV winner can receive the most last-place votes in three ways, corresponding to the three sets of inequalities, (\ref{a_majority}), (\ref{a_round2}) and (\ref{b_round2})  below (where we still assume $a_1+a_2>b_1+b_2>c_1+c_2$): (i) $A$ receives a majority of first-place votes; (ii) $A$ does not receive a majority of first-place votes but is the IRV winner; or (ii) $B$ is the IRV winner.

\begin{align}
&\left\{ \begin{array}{rl} a_1+a_2&>b_1+b_2+c_1+c_2 \mbox{ and}\\
b_2+c_2 &> \max\{ a_2+c_1, a_1+b_1\}, \end{array}\right. \label{a_majority}\\
&\mbox{or}\nonumber \\
& \left\{ \begin{array}{rl}
b_1+b_2+c_1+c_2&>a_1+a_2, \\
a_1+a_2+c_1&>b_1+b_2+c_2, \mbox{ and}\\
b_2+c_2 &> \max\{ a_2+c_1, a_1+b_1\},\end{array} \right.\label{a_round2} \\
&\mbox{or} \nonumber\\
& \left\{ \begin{array}{rl}
a_1+a_2+c_1&<b_1+b_2+c_2, \mbox{ and} \\
a_2+c_1 &> \max\{ a_1+b_1, b_2+c_2\}. \end{array}\right.\label{b_round2}
\end{align}

Since a candidate receives the most last-place votes exactly when they receive the least first and second-place votes (with complete ballots), inequalities (\ref{a_round2}) and (\ref{b_round2}) also describe the conditions under which the IRV winner and the Bucklin loser coincide.

Given these sets of inequalities, limiting probabilities (when $V \rightarrow \infty$) involving  IAC can be determined using  the theory of Ehrhart polynomials, which has been implemented in the software package Normaliz (Bruns et al., 2022). To explain briefly, each set  of inequalities defines a sub-simplex of the 6-simplex $a_1+a_2+b_1+b_2+c_1+c_2=1$. Ehrhart polynomials provide a tool for calculating their limiting volumes, which, since the IAC model places a uniform distribution on the simplex, is equivalent to calculating the corresponding probabilities. See the classic paper \cite{LLS} for a more in-depth discussion of the connection between the theory of Ehrhart polynomials and voting theory.

We use Normaliz to obtain the following IAC results.

\begin{proposition}
Under the IAC model, in three-candidate elections where voters provide complete preferences, the limiting probability (as $V \rightarrow \infty$) that the IRV winner coincides with the 

\begin{enumerate}
    \item Borda loser is $4301/241920 = 1.78\%$.
    \item Bucklin loser is $115/2304 =4.99\%$.
    \item candidate with the most last-place votes is $49/768=6.38\%$.
    \end{enumerate} 
\end{proposition}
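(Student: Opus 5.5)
Under IAC, as $V\to\infty$ the normalized profile $(a_1,a_2,b_1,b_2,c_1,c_2)/V$ becomes uniformly distributed on the 5-dimensional simplex $\Delta=\{x\in\mathbb{R}^6_{\ge 0}:\sum x_i=1\}$. Hence each limiting probability equals a ratio of volumes: the volume of the polytope cut out of $\Delta$ by the relevant linear inequalities, divided by the volume of $\Delta$. Ties form a measure-zero set and can be ignored.

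The computation has three steps.

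\textbf{Step 1 (symmetry).} The inequality systems above were written under the normalization that plurality scores satisfy $a_1+a_2>b_1+b_2>c_1+c_2$. The six relabelings of $\{A,B,C\}$ permute the coordinates of $\Delta$ while preserving its uniform measure, and the six plurality orders are equally likely. So each desired probability is $6$ times the volume fraction of the region in which the plurality order is $A\succ B\succ C$ and the relevant event holds.

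\textbf{Step 2 (decomposition into disjoint polytopes).}
\begin{enumerate}
    \item Borda loser: the region is the disjoint union of the polytopes (\ref{a_wins}) and (\ref{b_wins}). Each $\min$ splits into two linear inequalities. The two pieces are disjoint because one has $A$ winning the final round and the other has $B$ winning it.
    \item Bucklin loser: the region is the union of (\ref{a_round2}) and (\ref{b_round2}). These are exactly the no-majority cases, as already argued in the text.
    \item Most last-place votes: the region is the union of (\ref{a_majority}), (\ref{a_round2}) and (\ref{b_round2}), again with each $\max$ replaced by two inequalities.
\end{enumerate}
In every case the plurality-order inequalities are added, and the pieces are pairwise disjoint up to measure zero. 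The volume of each region is therefore the sum of the volumes of its pieces.

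\textbf{Step 3 (volume computation).} I would feed each polytope (homogenized as a cone in $\mathbb{R}^6$ intersected with $\sum x_i=1$) into Normaliz. Normaliz returns the leading coefficient of the Ehrhart quasi-polynomial, which equals the normalized volume, and the ratio to the volume of $\Delta$ is rational. Multiplying by $6$ and summing the pieces should give $4301/241920$, $115/2304$ and $49/768$.

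As a consistency check, I would verify that part (3) minus part (2) equals $6$ times the volume of the polytope (\ref{a_majority}) alone. I would also estimate each region by Monte Carlo sampling of uniform points in $\Delta$, using sorted uniforms.

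The main obstacle is the correctness of the region descriptions rather than the integration. The key question is whether the Borda loser analysis exhausts all cases, using the earlier claim that a majority winner cannot be the Borda loser. The other point to confirm is that the strict-versus-weak inequality conventions do not matter, which holds because they differ only on sets of measure zero. The volume computations themselves are mechanical once these descriptions are fixed.
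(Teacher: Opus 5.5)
Your plan matches the paper's proof: the paper likewise fixes the plurality order $A\succ B\succ C$, reduces each event to the inequality systems (\ref{a_wins})--(\ref{b_round2}), and obtains the limiting volumes from Ehrhart theory via Normaliz, so your factor of $6$ and your disjoint decompositions only spell out what the paper leaves implicit. Your proposed consistency check also goes through: a direct integration shows that the majority piece (\ref{a_majority}), multiplied by $6$, has probability exactly $1/72$, which equals $49/768-115/2304$.
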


To determine analogous probabilities for the IC model,  we adapt the geometric methods originally introduced in \cite{ST} and later refined in \cite{KM} and \cite{KMT23}, which also provide limiting probabilities as $V \rightarrow \infty$. Elections generated under the IC model are typically much different than those generated under the IAC model; hence the techniques used for calculating IC probabilities are also much different. The IC method involves integrals that often must be   evaluated numerically. Thus, the results  in  Proposition \ref{IC_prob} are (very close) approximations.

Note that the IC model typically produces  preference profiles that are extremely competitive (the numbers of ballots containing each ranking are very close). Thus, the probability that any candidate achieves a majority of first-place votes is vanishingly small and approaches zero as $V\rightarrow \infty$. So, we can assume that the  inequalities in (\ref{a_majority})  do not occur, which implies the probability that the IRV winner coincides with the Bucklin loser is equal to the probability that the IRV winner coincides  with the candidate with the most  last-place votes.

\begin{proposition} \label{IC_prob} 
Under the IC model, in three-candidate elections where voters provide complete preferences, the limiting probability (as $V\rightarrow \infty$) that the IRV winner coincides with the 
\begin{enumerate}
\item   Borda loser is approximately 2.811\%.
\item Bucklin loser or the  candidate with the most last-place votes is approximately 7.178\%.
\end{enumerate}
\end{proposition}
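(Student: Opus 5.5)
The plan is to pass from the IC model to a Gaussian limit and compute orthant-type probabilities. Under IC with $V$ voters, the vector $(a_1,a_2,b_1,b_2,c_1,c_2)$ is multinomial with equal cell probabilities $1/6$. Set $x_i=(n_i-V/6)/\sqrt{V}$. By the multivariate central limit theorem, $x=(x_1,\dots,x_6)$ converges to a centered Gaussian vector with covariance $\Sigma=\frac16 I-\frac1{36}J$, which is supported on the hyperplane $\sum x_i=0$.

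Every inequality in (\ref{a_wins}), (\ref{b_wins}), (\ref{a_round2}) and (\ref{b_round2}) is homogeneous linear once the constant $V/6$ terms are cancelled. Counting coefficients, both sides of each inequality have equal total weight, so the centering terms drop out. The majority inequality in (\ref{a_majority}) is the exception: it retains a term of order $V$, which is why its probability tends to $0$, as already noted. Consequently, each event converges to the probability that a fixed finite intersection of half-spaces through the origin contains the Gaussian vector. These are polyhedral cones, so their probabilities are Gaussian measures of cones, as in \cite{ST,KM,KMT23}.

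To organize the computation, I would use symmetry. The event ``IRV winner equals weak candidate'' is invariant under relabeling candidates. I therefore multiply the probability computed under the normalization $a_1+a_2>b_1+b_2>c_1+c_2$ by $3!=6$. Within that normalization:
\begin{itemize}
\item for the Borda loser, I sum the Gaussian measures of the cones (\ref{a_wins}) and (\ref{b_wins}), whose interiors are disjoint;
\item for the Bucklin loser, I sum the measures of (\ref{a_round2}) and (\ref{b_round2}), dropping the strict majority condition (which holds with probability tending to $1$) and the majority case (\ref{a_majority}).
\end{itemize}
Each cone is cut out by about five or six linear forms $\ell_j(x)=w_j\cdot x$. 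I form the induced covariance matrix $M_{jk}=w_j^T\Sigma w_k$ and express the cone probability as $P(Z_1>0,\dots,Z_k>0)$ with $Z\sim N(0,M)$. The $\min$ and $\max$ conditions split into two linear inequalities each, which keeps every region a single cone.

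The main obstacle is evaluating these orthant probabilities. For $k\le 3$ there are closed forms, namely $\frac18+\frac1{4\pi}\sum\arcsin\rho_{jk}$, but here $k$ is typically $4$ or $5$, and no elementary formula is available in general. I would first try to reduce dimension. Some constraints may be implied by others, and certain combinations of the forms may be uncorrelated, which factors the Gaussian. Following \cite{ST}, I would then write the remaining probability as an integral of lower-dimensional orthant probabilities, via Plackett's reduction formula or by conditioning on one coordinate, and evaluate the resulting one- or two-dimensional integrals numerically. This numerical step explains why the statement gives approximations, $2.811\%$ and $7.178\%$.

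As a sanity check, I would run a large-$V$ Monte Carlo simulation of the multinomial model. I would also confirm that the three summed cones are pairwise disjoint and exhaust the cases listed before the statement. Finally, I would check that the boundary events (ties) have Gaussian measure zero, so they do not affect the limit.
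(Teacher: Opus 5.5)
Your plan is correct and is essentially the paper's argument. The paper also reduces, following Saari--Tataru, to the normalized solid angle of each polyhedral cone on the hyperplane $\sum x_i=0$. That is exactly your Gaussian orthant probability, since $w_j^T\Sigma w_k=\frac16\,w_j\cdot w_k$ for the balanced forms. The paper likewise fixes the plurality order $A\succ B\succ C$ and multiplies by $6$. For the four- and five-constraint cones it uses a one-parameter deformation with Schl\"afli's formula and Gauss--Bonnet for the faces, which is the geometric counterpart of your Plackett reduction with the closed-form trivariate $\arcsin$ expression; in both, what remains is numerically evaluating one-dimensional integrals.
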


The proof is lengthy and involves a nice application of the Gauss-Bonnet Theorem. We include   part  of the proof of (1) here; the remainder of the proof is sketched in Appendix 2. 

 \begin{proof} 

Assume the order of first-place votes is $A \succ B \succ C.$ Let $A \succ_C B$ (respectively $B \succ_C A$)  denote that $A$ wins over $B$ (respectively $B$ wins over $A$) if $C$ is eliminated, and let $BSc(A) > BSc(B)$ mean that $A$ has a higher Borda score than $B$.  
We  divide each calculation into two cases, based on whether $A$ or $B$ is the IRV winner, and sum the results. 
Each case takes several steps.
 
\vskip 2mm

\noindent \textbf{Proof of (i) Step 1} Finding $P(A\succ B \succ C, B\succ_C  A)$, or the probability that $A\succ B \succ C$ and $B\succ_C  A$:

These conditions can be expressed as:
\begin{align*}
a_1+a_2 &> b_1+b_2,\\
b_1+b_2 &> c_1+c_2,  \mbox{   and} \\
b_1+b_2+c_2&>a_1+a_2+c_1.
  \end{align*}

Following the arguments of \cite{ST}, the probability of this occurring for
a large number of voters  is equal to the area of the spherical simplex $S$ defined by these
three inequalities on the surface of the unit sphere in $\mathbb{R}^3$, divided by the area of this sphere.
Let
\begin{align*}
 \mathbf v_1 &= (1, 1, -1, -1, 0, 0),  \\
 \mathbf v_2 & = (0,0, 1, 1, -1, -1),  \mbox{   and}\\
  \mathbf v_3 &= (-1,-1, 1, 1, -1, 1) 
  \end{align*}
   be the normal vectors of the three hyperplanes bounding $S$. By the Gauss-Bonnet Theorem,  
   $$Vol_2 (S) = \alpha_{12}+  \alpha_{13}+ \alpha_{23}-\pi$$ where $\alpha_{ij}$ is the angle between vectors $\mathbf v_i$ and $\mathbf v_j$ and $Vol_n$ is the $n-$dimensional volume. 
Using the law of cosines, $\alpha_{12} = \cos^{-1} \left(-\frac{\mathbf v_1 \cdot  \mathbf v_2}{||\mathbf v_1|| \cdot || \mathbf v_2 ||}\right) =  \cos^{-1} (\frac{2}{4}) = \frac{\pi}{3}.$ Similarly,  $\alpha_{13}=\cos^{-1}( \frac{4}{2\sqrt{6}})=\cos^{-1}( \frac{2}{\sqrt{6}})$ and   $\alpha_{23}= \cos^{-1}(-\frac{2}{2 \sqrt{6}})=  \cos^{-1}(-\frac{1}{\sqrt{6}})$. Thus, the probability is equal to approximately
  \begin{equation*}
\frac{1}{4 \pi}Vol_2 (S) = \frac{1}{4\pi} \left[ \frac{\pi}{3}+\cos^{-1}( \frac{2}{\sqrt{6}}) +  \cos^{-1}(-\frac{1}{\sqrt{6}}) - \pi \right] =0.0407767114.  \end{equation*}

\vskip 2mm
\noindent \textbf{Step 2} Finding $P(A\succ B \succ C, B\succ_C  A, \mbox{ and } BSc(A)>BSc(B))$, or the probability that additionally, $A$'s Borda score is higher that $B$'s Borda score:

This requires the additional condition $a_1+a_2+\frac{1}{2}(b_1+c_1)  > b_1+b_2+\frac{1}{2}(a_1+c_2).$
 Following \cite{KM}, we introduce the  modified inequality with parameter $t$,
   \begin{equation*} a_1+a_2+\frac{t}{2}(b_1+c_1)  > b_1+b_2+\frac{t}{2}(a_1+c_2)  \quad \text{with normal vector} \quad  \mathbf v_4 = (1-\frac{t}{2}, 1, \frac{t}{2}-1, -1, \frac{t}{2}, - \frac{t}{2}).
 \end{equation*}  
If $t=0$, the inequality reduces to a previous one; if $t=1$, we recover the required condition.  Similarly to Step 1, the probability that these conditions are met for a large number of voters  is equal to the  volume of the spherical simplex $S$  defined by these four inequalities on the surface of the unit sphere in $\mathbb{R}^4$, divided by the volume of the surface of this sphere, $2\pi^2$. 
To find  $Vol_3(S)$, we use the Sch{\"a}fli formula 
\begin{equation*}
d Vol_n = \frac{1}{n-1} \sum_{1 \le j < k \le n} Vol_{n-2}(S_j \cap S_k) d\alpha_{jk},
\end{equation*}
where $n=3$ and $S_i$ are the hyperplanes bounding $S$. Let $S_i$ be the hyperplane with normal vector $\mathbf v_i$.

Then as before,
\begin{equation*}
\alpha_{14}= \cos^{-1}\left(-\frac{(1, 1, -1, -1, 0, 0) \cdot (1-\frac{t}{2}, 1, \frac{t}{2}-1, -1,  \frac{t}{2}, - \frac{t}{2}) }{ 2\sqrt{4-2t+t^2}}\right)=\cos^{-1}\left(\frac{t-4}{2\sqrt{4-2t+t^2}}\right).
\end{equation*}
Similarly $\alpha_{24}  =\cos^{-1}\left(\frac{4-t}{4\sqrt{4-2t+t^2}}\right)$ and $\alpha_{34} = \cos^{-1}\left(\frac{2\sqrt{2}}{ \sqrt{3}\sqrt{4-2t+t^2}}\right).$
This leads to 
\begin{equation*}
d\alpha_{14} =\frac{ -\sqrt{3} }{ (t^2-2t+4) },  \quad d\alpha_{24}=\frac{ \sqrt{3}t }{ (t^2-2t+4) \sqrt{5t^2-8t+16} } \quad \text{and} \quad d \alpha_{34}= \frac{ 2\sqrt{2}(t-1)}{(t^2-2t+4)\sqrt{3t^2-6t+4}}. \end{equation*}

To calculate $Vol_1(S_j \cap S_k)$, we must find where the hyperplanes intersect.  First, we  identify a basis for the subspace orthogonal to that spanned by $\mathbf v_1 \ldots, \mathbf v_4$. We use $\mathbf v_5= (-1, 1, -1, 1, 0, 0) $ and $\mathbf v_6 = (1,1,1,1,1,1)$.  Let $P_{ijk}$ be the vertex lying at  the intersection of $S_i$, $S_j$ and $S_k$.  To find $P_{124}$, we solve the following conditions:
\begin{align*}
n_1+n_2 - n_3-n_4=0 \\
n_3+n_4 - n_5-n_6=0 \\
-n_1-n_2+n_3 + n_4-n_5+n_6>0 \\
(1-\frac{t}{2})n_1+n_2+(\frac{t}{2}-1)n_3-n_4+ \frac{t}{2}n_5- \frac{t}{2}n_6=0 \\
-n_1+n_2-n_3+n_4=0\\
n_1+n_2+n_3+ n_4+n_5+n_6=0.\\
\end{align*}
This yields $P_{124}=(-1, 1, 1, -1, -1, 1)$. In the same way, we find, $P_{134}=(1, 1, 1, 1, -2, -2) $ and $P_{234} = (-5t+12, 13t-12, 7t-12, -11t+12, -8t, 4t)$.
Thus,

 \begin{align*}
 Vol_{1}(S_1 \cap S_4)= \angle(P_{124}, P_{134})&=\cos^{-1}\left( \frac{P_{124} \cdot P_{134}}{\sqrt{6}\cdot 2\sqrt{3} }\right)=\cos^{-1}\left(0 \right)=\frac{\pi}{2} \\
Vol_{1}(S_2 \cap S_4 )=\angle(P_{124}, P_{234}) &=\cos^{-1}\left( \frac{P_{124} \cdot P_{234}}{\sqrt{6}\cdot 2\sqrt{3}\sqrt{37t^2-72t+48}   }\right)\\
& =\cos^{-1}\left(\frac{48(t-1)}{6\sqrt{2}\sqrt{37t^2-72t+48}} \right)=\cos^{-1}\left(\frac{4\sqrt{2}(t-1)}{\sqrt{37t^2-72t+48}} \right) \\
 Vol_{1}(S_3 \cap S_4)= \angle(P_{134}, P_{234}) &=\cos^{-1}\left( \frac{P_{134} \cdot P_{234}}{2\sqrt{3}\cdot 2 \sqrt{3} \sqrt{37t^2-72t+48}   }\right)\\
 &=\cos^{-1}\left(\frac{12t}{12\sqrt{37t^2-72t+48}} \right)= \cos^{-1}\left(\frac{t}{\sqrt{37t^2-72t+48}}\right).
 \end{align*}
Substituting these values into the Sch{\"a}fli formula with $n=3$,  we get
$$d Vol_3 = \frac{1}{2} \sum_{1 \le j < k \le 4} Vol_{1}(S_j \cap S_k) d\alpha_{jk},$$ 
so
\begin{align*}
Vol_3(S_{t=1}) & = Vol_{3}(S_{t=0})+ \int_{0}^1 d Vol_3  \\
& =   Vol_{3}(S_{t=0}) +\frac{1}{2} \int_{0}^1  Vol_{1}(S_1 \cap S_4) d\alpha_{14}+Vol_{1}(S_2 \cap S_4) d\alpha_{24}+Vol_{1}(S_3 \cap S_4) d\alpha_{34} dt \\
& =  Vol_{3}(S_{t=0}) +\frac{1}{2}\left[I_1+I_2+I_3 \right]
 \end{align*}
where, up to 10 decimal places, 
$I_1 = \int_{0}^1  Vol_{1}(S_1 \cap S_4) d\alpha_{14}= \int^1_0\frac{\pi}{2} \cdot \frac{ -\sqrt{3} }{ (t^2-2t+4) } dt = -0.82246703,$
$I_2 =0.1508811848,$ and $I_3 =-0.4079803509.$

By Step 1,  $Vol_{3}(S_{t=0}) = Vol_2(S_{t=0})\frac{2\pi^2}{4\pi} =0.0407767114 (2\pi^2)$. Hence, the probability that $B$ is the IRV winner and has a lower Borda score than $A$ is approximately
\begin{equation*}
\frac{1}{2\pi^2} \left[0.0407767114(2\pi^2) + \frac{1}{2}( -0.82246703)+0.1508811848-0.4079803509) \right] = 0.01342486871.
\end{equation*}

The remainder of the proof is similar and continues in Appendix 2. \end{proof}


\section{Simulation Methodology and Results}\label{section:simulation_results}
In this section we describe the methodology of our simulations and present the results.

For each model, we ran 100,000 simulations with $V=4001$ voters and three candidates. Voters and candidates were placed either on the real line or in $\mathbb{R}^2$ according to the specified probability distribution for each model. Voter preferences were determined  by increasing Euclidean distance to the candidates, with ties among these distances broken uniformly at random. We also ran simulations with $V=1001$ and $V=3001$ voters; our results were essentially unchanged. We use an odd number of voters to minimize the probability of ties in vote tallies (with the infrequent ties broken at random), and we use a large enough number of voters to mimic sizable electorates in real elections (and to mirror the limiting probabilities presented in the previous section).

To incorporate partial preferences, we separately ran simulations using the same methodology but where each voter has a probability of 0.35 of casting a bullet vote. The number 0.35 is based on partial preference data from the dataset \texttt{USA}. In American  three-candidate ranked-choice elections, the median \emph{bullet vote rate} (the proportion of voters who rank only one candidate in a given election) is approximately 0.35, assuming there is no candidate who earns more than 70\% of the first-place vote.\footnote{If there is a candidate who earns a strong majority of the first-place votes then there are usually many more bullet votes cast, presumably because most voters are aware their favorite is the only viable candidate.} Of course, there are other sensible ways to incorporate partial preferences, such as using ideological distance cutoffs. We choose random truncation because of its simplicity.

The simulation results for both complete and partial ballots are presented in Table \ref{table:sim_results_1D} and in Figures \ref{figure:1D_results}, \ref{figure:2D_complete_ballots}, \ref{figure:2D_partial_ballots}, and \ref{figure:cond_prob_Borda_is_minimal_utility} (see Appendix 1 for the last three figures). Table \ref{table:sim_results_1D} gives a sample of the simulation results for 8 models, where each table entry is the percentage of simulated elections in which IRV elects the corresponding kind of weak candidate. For example, when some voters cast bullet votes the estimated probability IRV elects the Borda AVG loser is 1.0\% under the UNI model. The figures provide complete results for models involving BIM($\sigma$) or WBI($\sigma$). Figure \ref{figure:1D_results} shows the percentage of simulated elections in which BIM($\sigma$) or WBI($\sigma$) chooses a weak candidate for a given $\sigma$. Figures \ref{figure:2D_complete_ballots} and \ref{figure:2D_partial_ballots} provide similar information for the 2D models.

Overall, the probability that IRV elects a weak candidate using any of the definitions considered here, or under any model, is fairly low. (Note that the probability of IRV selecting the Bucklin loser or the candidate with the most last-placed votes is 0 under all one-dimensional spatial models, as asserted in Proposition 1.)  The highest probabilities for complete ballots in Table \ref{table:sim_results_1D}  are 8.53\% and  12.08\% for the candidate with the most last-place votes and the Borda loser respectively. If bullet votes are allowed, the highest probability is 21.2\% that the IRV winner is the candidate with the minimum utility. All of these occur under the BIM(0.5) $\times$ BIM(0.5) model, suggesting that IRV is most likely to choose the weakest candidate when an electorate is hyper-polarized.

With some exceptions, Figures \ref{figure:1D_results}, \ref{figure:2D_complete_ballots}, and \ref{figure:2D_partial_ballots} indicate that the probability that IRV selects a weak candidate are larger for more polarized electorates (smaller $\sigma$). 
The figures also highlight that the Borda and Bucklin losers and the candidate of minimal social utility are often different. In particular, when partial ballots are introduced, IRV rarely elects a Borda loser, but it frequently selects the candidate of minimal utility under BIM($\sigma$) for small $\sigma$ (with the exception of the UNI $\times$ BIM($\sigma$) model). Arguably, the worst-case scenario occurs when the Borda loser is also the candidate of minimal utility, and IRV elects this candidate. To quantify this, we computed the percentage of elections in which IRV chooses the Borda loser conditional on the outcome that the Borda loser is the minimal-utility candidate; the results are shown in Figure \ref{figure:cond_prob_Borda_is_minimal_utility}. For complete ballots, this conditional probability is substantial only in highly polarized BIM electorates, and otherwise remains very low. Overall, IRV rarely produces this extreme outcome in our simulations.

\begin{table}
\centering
\begin{tabular}{c|cccc}
\multicolumn{5}{c}{\textbf{Complete Ballots}}\\
\hline
\hline
Model & Min Utility & Borda loser & Bucklin loser & Most last place\\
\hline

UNI &0.19\% &1.74\% &0\% &0\% \\
BIM(0.5) & 0.69\%&8.86\% &0\% &0\% \\

WBI(0.5)&3.68\% &1.68\% &0\% &0\% \\

UNI$\times$UNI &0.11\% &0.33\% & 0.89\%& 1.27\%\\
UNI$\times$BIM(0.5)&0.42\%&1.34\% & 3.18\%&4.58\%\\

UNI$\times$WBI(0.5)&1.41\% & 0.57\% & 1.30\%&2.34\%\\

BIM(0.5)$\times$BIM(0.5)&2.70\% & 12.08\%&2.15\%&8.53\%\\

WBI(0.5)$\times$WBI(0.5)&3.59\%&0.4\%&0.25\%&1.91\%\\
\hline
IC&$-$&$2.81$\%&7.18\%&7.18\%\\
IAC&$-$&1.78\%&4.99\%&6.38\%\\
\hline

\end{tabular}

\vspace{.2 in}

\begin{tabular}{c|ccccc}
\multicolumn{6}{c}{\textbf{Partial Ballots}}\\
\hline
\hline
Model & Min. Utility & Borda AVG & Borda OM & Borda PM & 
 Bucklin \\
 \hline

 UNI &1.4\% &1.0\% &0.9\% &1.0\% &0.0\%  \\
 BIM(0.5) & 9.8\%& 2.3\%&2.4\% &2.2\% & 0.0\% \\
 WBI(0.5) & 3.7\%&1.7\% &1.4\% &1.9\% &0.0\%  \\
 UNI$\times$UNI & 0.5\%& 0.2\%& 0.2\%&0.2\% &0.5\%  \\
 UNI$\times$BIM(0.5) &2.8\% &0.6\% &0.6\% &0.7\% &2.5\%  \\
 UNI$\times$WBI(0.5)&1.9\% &0.5\% &0.4\% &0.5\% &0.9\%  \\
 BIM(0.5)$\times$BIM(0.5)&21.2\% &0.3\% &0.4\% &0.3\% &3.7\%  \\
 WBI(0.5)$\times$WBI(0.5)&3.9\% &0.6\% &0.4\% &1.0\% &0.2\%  \\
 \hline
\end{tabular}
\caption{A sample of results under our spatial models. Each entry is the percentage of simulated elections in which IRV elects the a specific type of weak candidate, under a given model. For comparative purposes, we include the IC and IAC probabilities from Section \ref{section:analytical_results} for the case of complete ballots.}
\label{table:sim_results_1D}
\end{table}

\begin{figure}[htbp]
    \centering

    \begin{minipage}{0.485\textwidth}
        \centering
        \includegraphics[width=\linewidth]{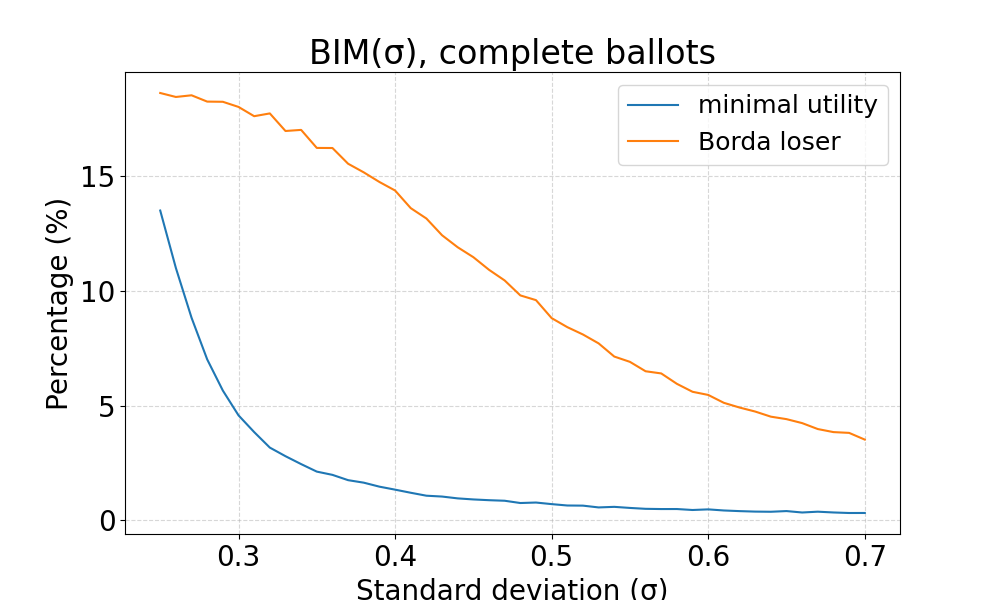}
    \end{minipage}
    \begin{minipage}{0.485\textwidth}
        \centering
        \includegraphics[width=\linewidth]{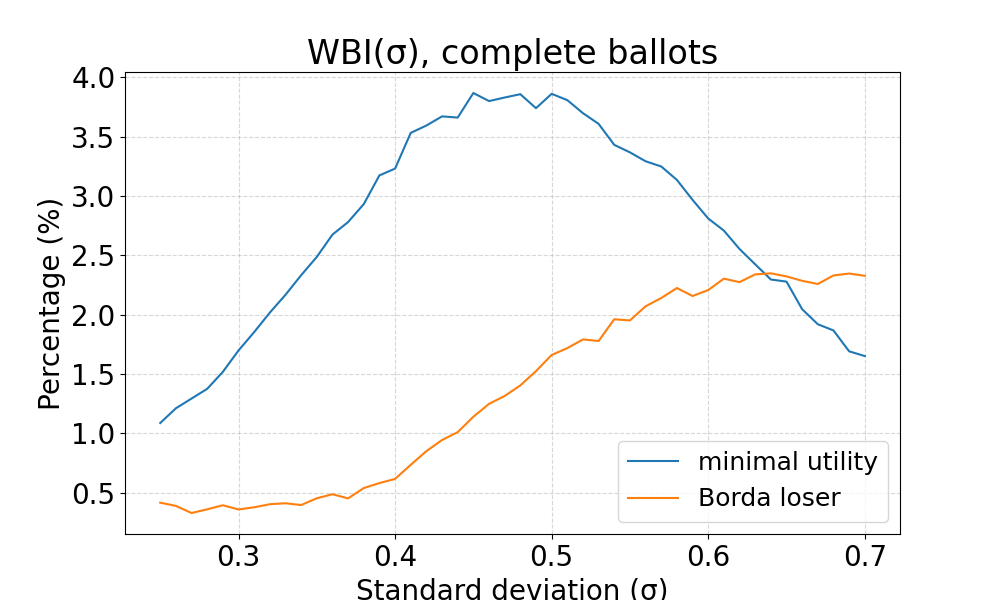}
    \end{minipage}

\begin{minipage}{0.485\textwidth}
        \centering
        \includegraphics[width=\linewidth]{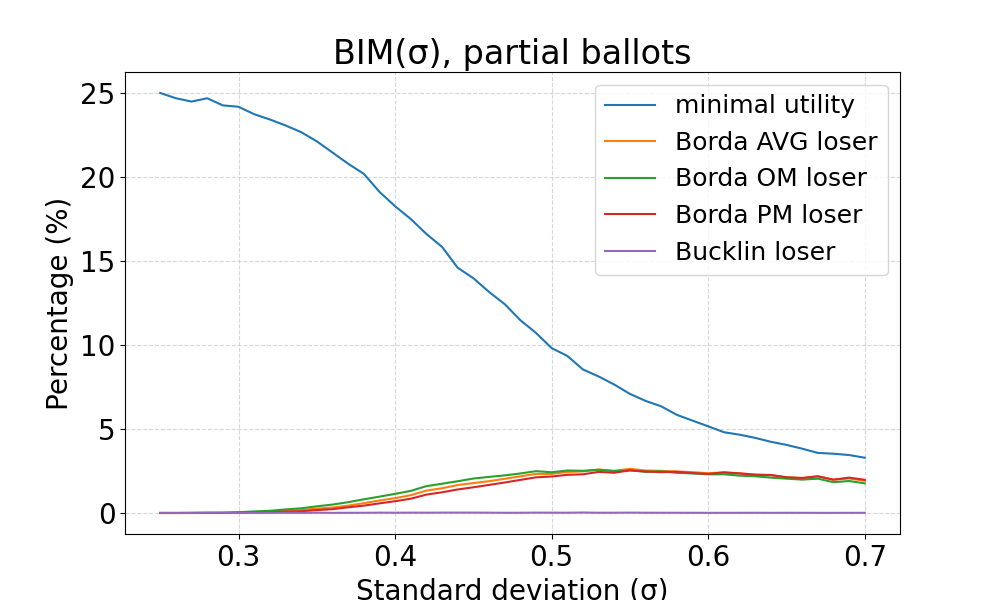}
    \end{minipage}
    \begin{minipage}{0.485\textwidth}
        \centering
        \includegraphics[width=\linewidth]{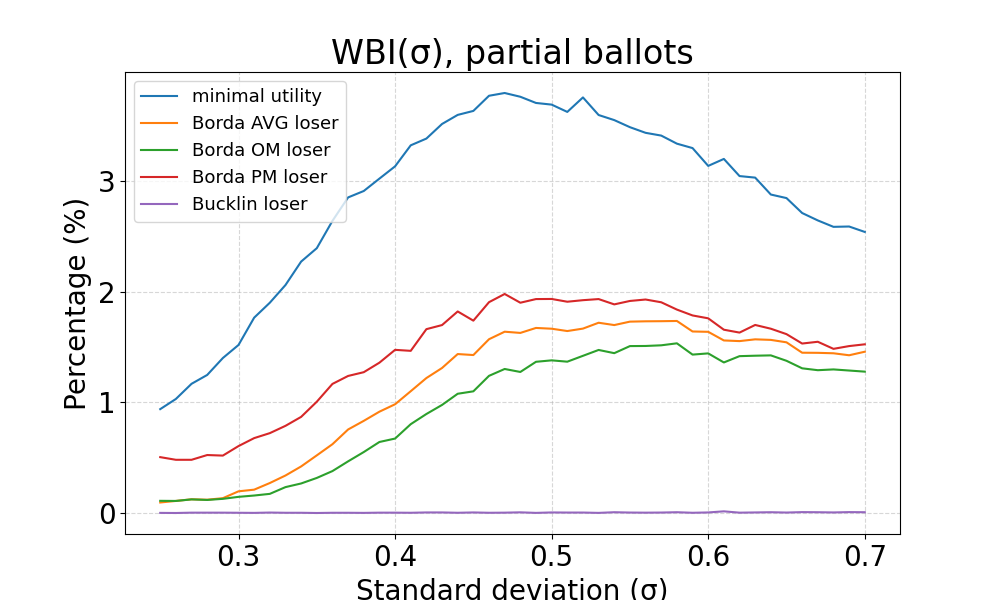}
    \end{minipage}
    \caption{Results under the 1D spatial models BIM$(\sigma)$ and WBI$(\sigma)$. The top two figures give results when every voter provides a complete ranking, and the bottom figures give results when a voter provides a complete ranking with probability 0.65.}
    \label{figure:1D_results}
\end{figure}

The figures demonstrate that different models produce quite different probabilities, sometimes in surprising ways. We explore why this is the case in Section \ref{section:model_differences}.
\section{Empirical Results}\label{section:empirical}

In this section we present results from the four real-world datasets described in Section \ref{section:models_datasets}. We remind the reader that in real-world elections we cannot analyze social utility, and thus we provide results for Borda and Bucklin losers. We also provide results for the candidate with the most last place votes only in the complete ballot setting. As in the previous section, for elections with partial ballots we exclude results for most last place votes, and use three Borda variants. We also recall that while most of the elections in the datasets have more than three candidates, we create a three-candidate preference profile from each by running the IRV algorithm until only three candidates remain. 

With the exception of a handful of elections from the state of Victoria in Australia, real-world elections have many partial ballots, especially when we begin with a large number of candidates and reduce to three using the IRV algorithm. To provide a comparison to our complete-preference simulation results, we complete each bullet vote to a ballot of length three using a proportional methodology as follows. Suppose 100 voters cast a bullet vote for $A$, 30 voters cast the ballot $A\succ B \succ C$, and 20 voters cast the ballot $A\succ C \succ B$. Then $100(30/50)$ of $A$'s bullet votes are completed to ballots of the form $A\succ B \succ C$ and $100(20/50)$ are completed to ballots of the form $A\succ C \succ B$. Any fractions are rounded to the nearest integer. While other ways of completing partial ballots are reasonable, we argue this approach preserves observed conditional preference information.

The results are shown in Table \ref{table:empirical_results}. The numbers in parentheses in the left column indicate the number of elections in each dataset. In relative terms, it is quite rare for IRV to elect a weakest candidate in real elections. The highest percentages occur in the \texttt{Scotland} dataset, which is likely a reflection of the multiwinner nature of these contests. The numbers significantly decrease when we move to \texttt{Scot\_condensed}.

\begin{table}
\centering
\begin{tabular}{l|ccc}
\multicolumn{4}{c}{\textbf{Ballots completed proportionally}}\\
\hline
\hline
& Borda loser & Bucklin loser & most last place\\
\hline
\texttt{Australia} (325) & 1 (0.31\%) & 11 (3.38\%) & 19 (5.85\%) \\
\texttt{Scotland} (1128)&17 (1.51\%) & 62 (5.50\%) & 101 (8.95\%)\\
\texttt{Scot\_condensed} (1128)&16 (1.42\%) &37 (3.28\%) &38 (3.37\%)\\
\texttt{USA} (401)& 1 (0.25\%) & 9 (2.24\%) & 11 (2.74\%)\\
\hline

\end{tabular}

\vspace{.1 in}

\begin{tabular}{l|cccc}
\multicolumn{5}{c}{\textbf{Actual ballots}}\\
\hline
\hline
& Borda AVG & Borda OM & Borda PM & Bucklin\\
\hline
\texttt{Australia} (325) &0 (0\%) &2 (0.62\%)& 0 (0\%)&0 (0\%) \\
\texttt{Scotland} (1128)& 4 (0.35\%) & 57 (5.05\%)& 10 (0.89\%)&26 (2.30\%)\\
\texttt{Scot\_condensed} (1128)&2 (0.18\%) &13 (1.15\%) &6 (0.53\%) &16 (1.42\%)\\
\texttt{USA} (401)&0 (0\%) &1 (0.25\%)& 1 (0.25\%)&2 (0.50\%) \\
\hline

\end{tabular}
\caption{The numbers and corresponding percentages of elections in which IRV chooses the weakest candidate in reduced three-candidate elections. The  numbers in parentheses in the leftmost column  denote the total number of elections in each dataset. Results are shown for the elections in which ballots were completed proportionally (Top) and with the actual elections (Bottom).}
\label{table:empirical_results}
\end{table}

An example of an election where IRV chooses both the Borda PM loser and the Bucklin loser is given below. This election was analyzed in \cite{MM23}.

\begin{example}\label{example:MN}
The 2021 Minneapolis ward 2 city council election contained the five candidates Tom Anderson, Yusra Arab, Guy Gaskin, Cam Gordon, and Robin Worlobah. Anderson and Gaskin were eliminated in the first two rounds, resulting in the preference profile shown in Table \ref{table:MN_profile}. The vote totals for Arab, Gordon, and Worlobah in this round are  3236, 2800, and 2879, respectively. Gordon is eliminated and Worlobah wins under IRV with 4056 votes to Arab's 4037.

Under Bucklin voting, no candidate receives an initial majority and thus we add first-rankings to second-rankings, resulting in scores of 5125, 5007, and 4812 for Arab, Gordon, and Worlobah, respectively. Thus, the IRV winner is the Bucklin loser in this election.

The Borda PM scores for the three candidates (in alphabetical order) are 8361, 7807, and 7691, and thus Worlobah is also the Borda PM loser. (It is straightforward to check the Borda AVG loser is Gordon and the Borda OM loser is Arab.)

\end{example}

As pointed out in \cite{MM23}, this election lacks a Condorcet winner since Arab beats Gordon head-to-head, Gordon beats Worlobah head-to-head, and Worlobah beats Arab head-to-head. Such an outcome is atypical for ranked-choice elections, and it is not surprising that IRV would choose the weakest candidate under some metrics in this situation. When there is no Condorcet winner it is difficult to identify a strongest and weakest candidate, and (as we see in this example) each candidate is likely to be ``weakest'' under some measure.

\begin{table}
\centering

\begin{tabular}{l|ccccccccc}
&908&756&1572&801&1177&822&1088&1299&492\\
\hline
1st&$A$&$A$&$A$&$G$&$G$&$G$&$W$&$W$&$W$\\
2nd&$G$&$W$&&$A$&$W$&&$A$&$G$&\\
3rd&$W$&$G$&&$W$&$A$&&$G$&$A$&\\
    
\end{tabular}
\caption{The preference profile for the 2021 Minneapolis ward 2 city council election, after reducing to the three candidates Arab ($A$), Gordon ($G$), and Worlobah ($W$).}
\label{table:MN_profile}

\end{table}

\section{Discussion}\label{section:discussion}

The preceding sections approached the same question from three complementary perspectives. Section \ref{section:analytical_results} provided exact probabilities under the classical IC and IAC models, giving \emph{a priori} baseline frequencies in electorates with no predetermined ideological structure. Section \ref{section:simulation_results} examined spatial models, allowing us to investigate the role of polarization and social utility. Section \ref{section:empirical} analyzed real-world election data, providing evidence about how often these outcomes occur in practice. Although these approaches differ substantially in their assumptions and methodology, they lead to a common conclusion: IRV can elect candidates who are weak under several reasonable notions of popularity, but such outcomes are generally quite rare. The primary exception occurs in highly polarized electorates, where the probability that IRV elects a weak candidate can become substantially larger. In this section, we discuss the mechanisms responsible for these patterns and explain why different notions of candidate weakness behave differently across our models. In particular, we explain the results in Sections \ref{section:simulation_results} and \ref{section:empirical}, comparing results among spatial models, and comparing simulated and empirical results. We also provide context for our results by comparing IRV to the methods of plurality and Condorcet.

\subsection{How does IRV choose a weakest candidate?}\label{why_weak}

Under our spatial models, instances in which IRV elects a weakest candidate tend to occur in conjunction with a \emph{center squeeze} outcome \cite{P13}, where a centrist Condorcet winner is eliminated in the first round.  Figure \ref{figure:1D_center_squeeze} illustrates two such examples under the BIM(0.5) model with complete ballots.

In the  image on the left,   $C$ is the Condorcet winner but receives the fewest first-place votes and is eliminated in the first round. Most of $C$’s support transfers to $A$, who narrowly defeats $B$. In this example, IRV does not elect the candidate with minimum social utility, $B$ (note that
$B$ lies farthest from the center), but does elect the Borda loser $A$. In this election $A$ is the Borda loser because  because $A$ and $C$ split the support of voters on the right, causing $B$ to have a strong first-place showing and to narrowly avoid having the smallest Borda score.

The  image on the right  shows a closely-related configuration in which the IRV winner, $B$, wins after the Condorcet winner $C$ is eliminated. However, because $A$ is slightly closer to the center, $B$ is also the candidate with the minimum social utility, while maintaining a small edge over $A$ in Borda scoring.

\begin{figure}
  \centering
  \begin{subfigure}[b]{0.48\textwidth}  
    \centering
    \includegraphics[width=\linewidth]{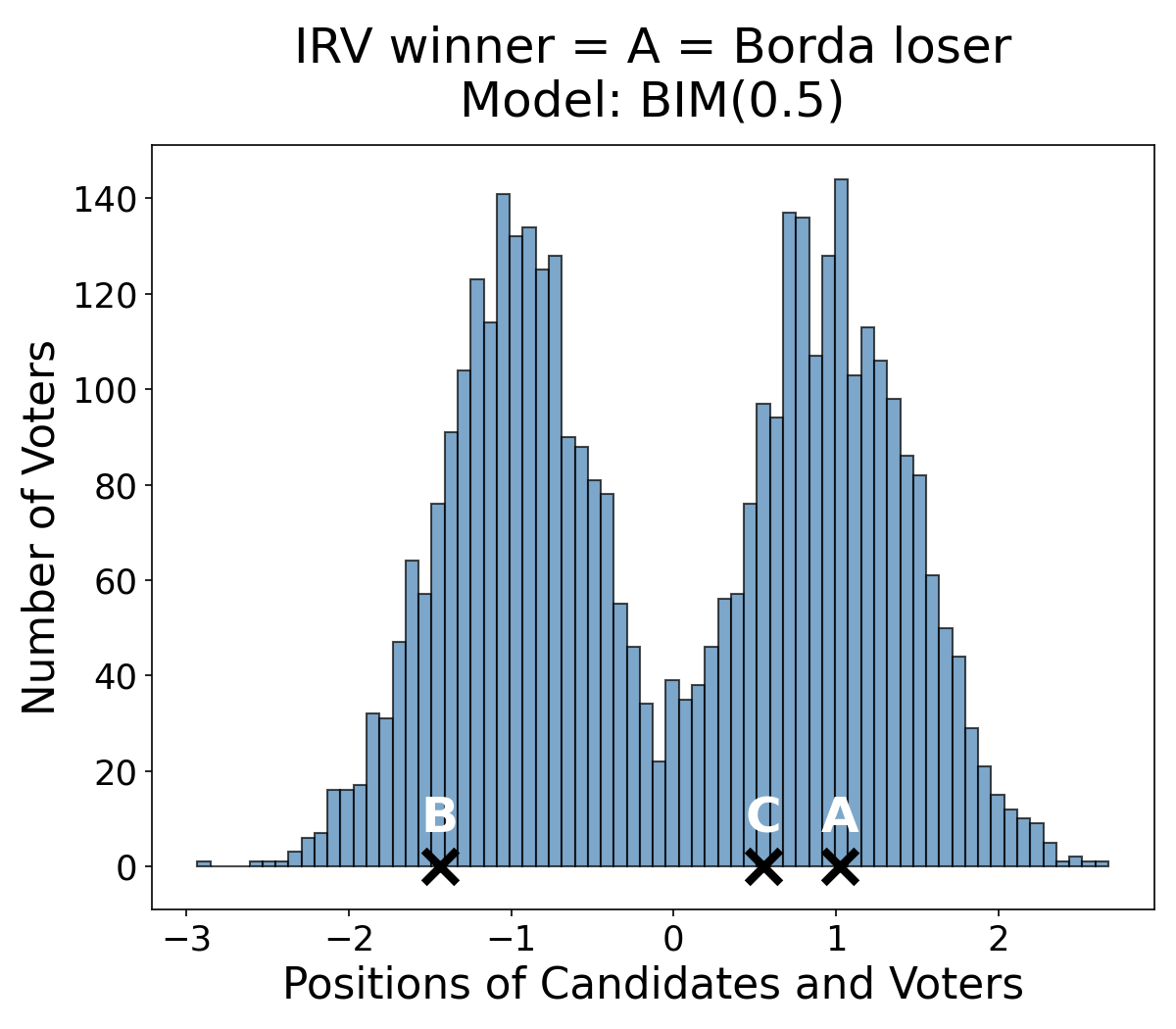}
  \end{subfigure}\hfill
  \begin{subfigure}[b]{0.48\textwidth}
    \centering
    \includegraphics[width=\linewidth]{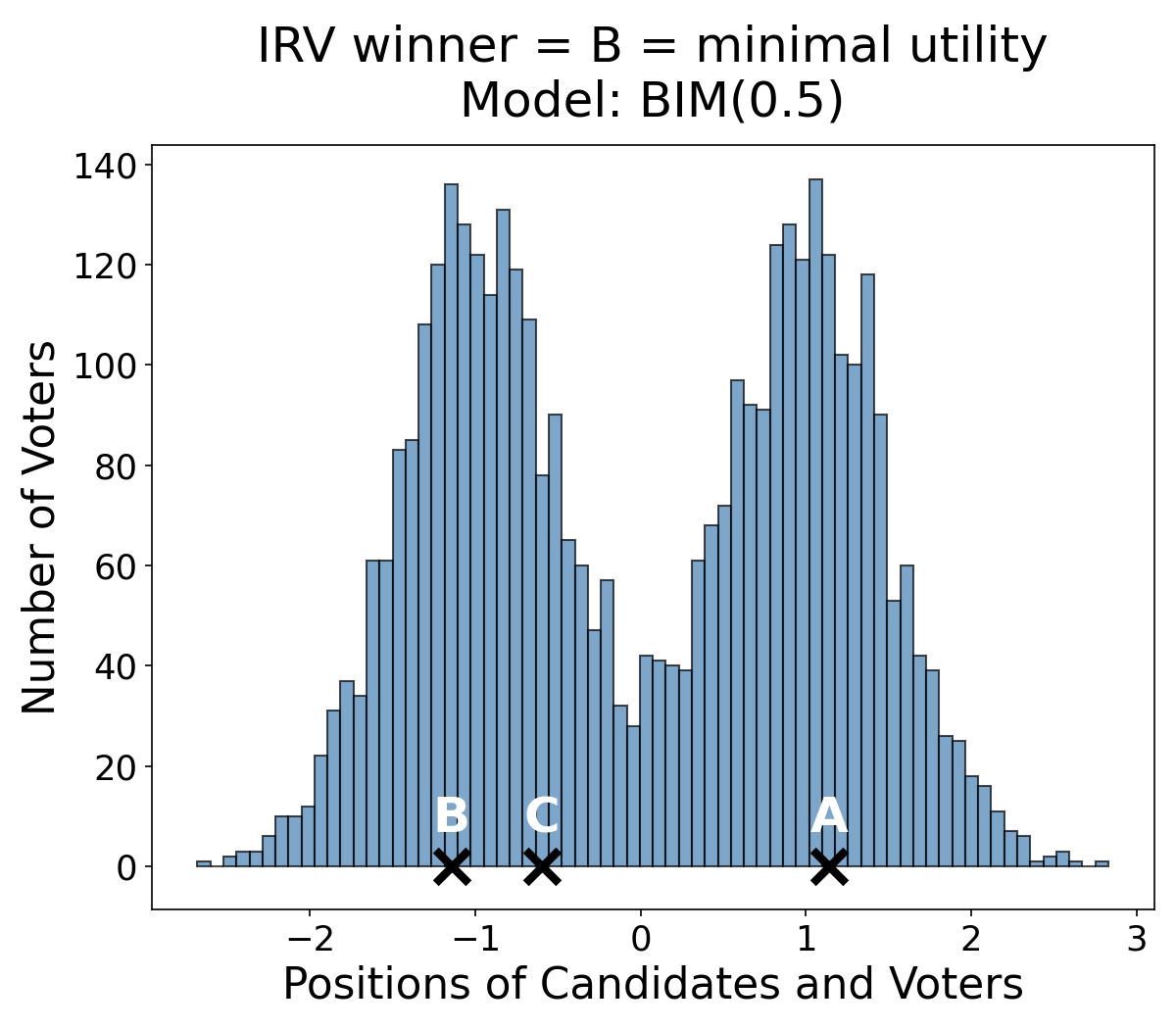}
  \end{subfigure}
  \caption{(Left) An election generated under the model BIM(0.5) where $A$ is the IRV winner and the Borda loser. (Right) An election generated under the model BIM(0.5) where the IRV winner $B$ minimizes social utility.}
  \label{figure:1D_center_squeeze}
\end{figure}

Figure \ref{figure:2D_center_squeeze} in Appendix 1 provides similar examples for 2D models. Although the  center squeeze phenomenon generally refers to 1D spatial models, it also applies in higher dimensions. 
Figure \ref{figure:2D_center_squeeze} (Left)  shows an election in which the candidate closest to the center, $C$, is eliminated in the first round and because $B$ is closer to the origin (the overall centroid of the electorate) than $A$, they are the IRV winner. However, $B$ is both the Borda loser and the candidate with minimum social utility. A similar dynamic occurs in Figure \ref{figure:2D_center_squeeze} (Right) where $B$ is the IRV winner and the Borda loser.


While these examples are ``typical'' in our simulations, there are, of course, other ways IRV can elect a weak candidate. For real-world context, we consider the elections in which the IRV candidate corresponds with one or more types of weak candidate in the datasets \texttt{Australia} and \texttt{USA} (actual ballots).  In some of these instances, the IRV winner does not coincide with the Condorcet winner, and hence these elections, while not mapped precisely to a spatial model, can be interpreted as a center squeeze. In other cases, IRV elects a weak candidate for different reasons.

As Table \ref{table:empirical_results} indicates, in the \texttt{Australia} dataset there are two elections in which IRV elects the Borda OM loser; otherwise IRV does not elect a weakest candidate. One of these two elections is the 2021 mayoral race in Hunters Hill, where a Condorcet winner exists but is not elected by IRV. This is best understood as a center squeeze outcome. The other election is a 2015 state legislative assembly election in Lismore, where the IRV and Condorcet winners agree and are both the Borda OM loser; this does not correspond to a center squeeze outcome.

In the dataset \texttt{USA}, the numbers $1, 1$ and $2$ in Table  \ref{table:empirical_results}, indicating the number of elections where the IRV winner is the Borda OM, Borda PM, and Bucklin loser respectively, correspond to only two different elections. The first is analyzed in Example \ref{example:MN}. This election does not contain a Condorcet winner, in which case  it is not surprising for any voting method to elect a weak candidate under some measure. The other election is a 2021 multi-winner city council election from Moab, UT, where  IRV does not elect the Condorcet winner  but elects the Borda OM and Bucklin loser. This is best understood as a center squeeze outcome.

These examples show that in real-world elections, IRV  may elect  weak candidate in a number of different ways. They also suggest that empirical data does not map precisely onto any of these spatial models, and that voter consensus in real elections about strong or weak candidates often cannot be neatly captured by spatial models such as ours.


\subsection{Differences among the spatial models}\label{section:model_differences}

The differences in probabilities among the different spatial models raise many questions, both about the dynamics of the IRV algorithm as well as the identity of the different weak losers in these different settings. It is not feasible to explain the behavior of every aspect of every model, so to keep the narrative streamlined, we focus here on two of the most striking features of Figure \ref{figure:1D_results}.

The first concerns some of the differences between Figure \ref{figure:1D_results} (Left) and (Right). Why are the Borda loser probabilities so large for very polarized electorates under the BIM$(\sigma)$ model, but not for the WBI$(\sigma)$ model? Furthermore, why, as  $\sigma$ increases, do the Borda probabilities decrease  under BIM, but increase under WBI? We cannot provide definitive answers, but based on an analysis of candidate placements in many generated elections, we offer a partial explanation.  Under the BIM model, IRV generally elects the Borda loser  when the candidate placement is as shown in Figure \ref{figure:1D_center_squeeze} (Left), and the situation is as described in Section \ref{why_weak}.  Under the WBI model however,  IRV  generally elects the Borda loser when the candidate placement is as shown in Figure \ref{figure:1D_model_difference} (Left). In this scenario, two candidates are placed firmly in the left camp, with the IRV winner $A$ to the left of the Condorcet winner, $C$, with the third candidate, $B$,  in the interval $(0, 0.5)$. That is, for IRV to elect the Borda loser under a WBI model,  there must be some voters in the center so that $B$ receives a larger Borda score than $A$.  If $\sigma$ is small, then there are almost no voters in the interval $(0, 0.5)$ and hence the probability shown in Figure \ref{figure:1D_results} is very small for WBI$(\sigma)$. As $\sigma$ increases, the probability of generating such a scenario increases. 

The second question we consider is why are the partial ballot Borda loser probabilities so much lower than the complete ballot probabilities for the BIM models (compare Figure \ref{figure:1D_results} (Top and Bottom Left))? We cannot know for certain but, as with the previous question, we hazard an answer for the Borda AVG candidate based on  an analysis of candidate placements in many generated elections. When partial ballots are allowed, it appears IRV elects the Borda AVG loser when candidates are as shown in Figure \ref{figure:1D_model_difference} (Right). Two criteria appear to be necessary:  (1) two candidates, ($A$ and $C$  in Figure \ref{figure:1D_model_difference}) must be placed very closely together on one side of the spectrum, near 1 or $-1$; and (2) the remaining candidate ($B$) must be placed on the opposite side of the spectrum and even further from the center. In fact, in most of the generated elections in which IRV elects the Borda AVG loser, the candidate placement is even more extreme than that  in Figure \ref{figure:1D_model_difference}.  We do not know why these exact conditions allow for IRV to select the Borda AVG loser under random ballot truncation, but the narrowness of these conditions explains why the probabilities in the bottom left of Figure \ref{figure:1D_results} are so small.

\begin{figure}
  \centering
  \begin{subfigure}[b]{0.48\textwidth}  
    \centering
    \includegraphics[width=\linewidth]{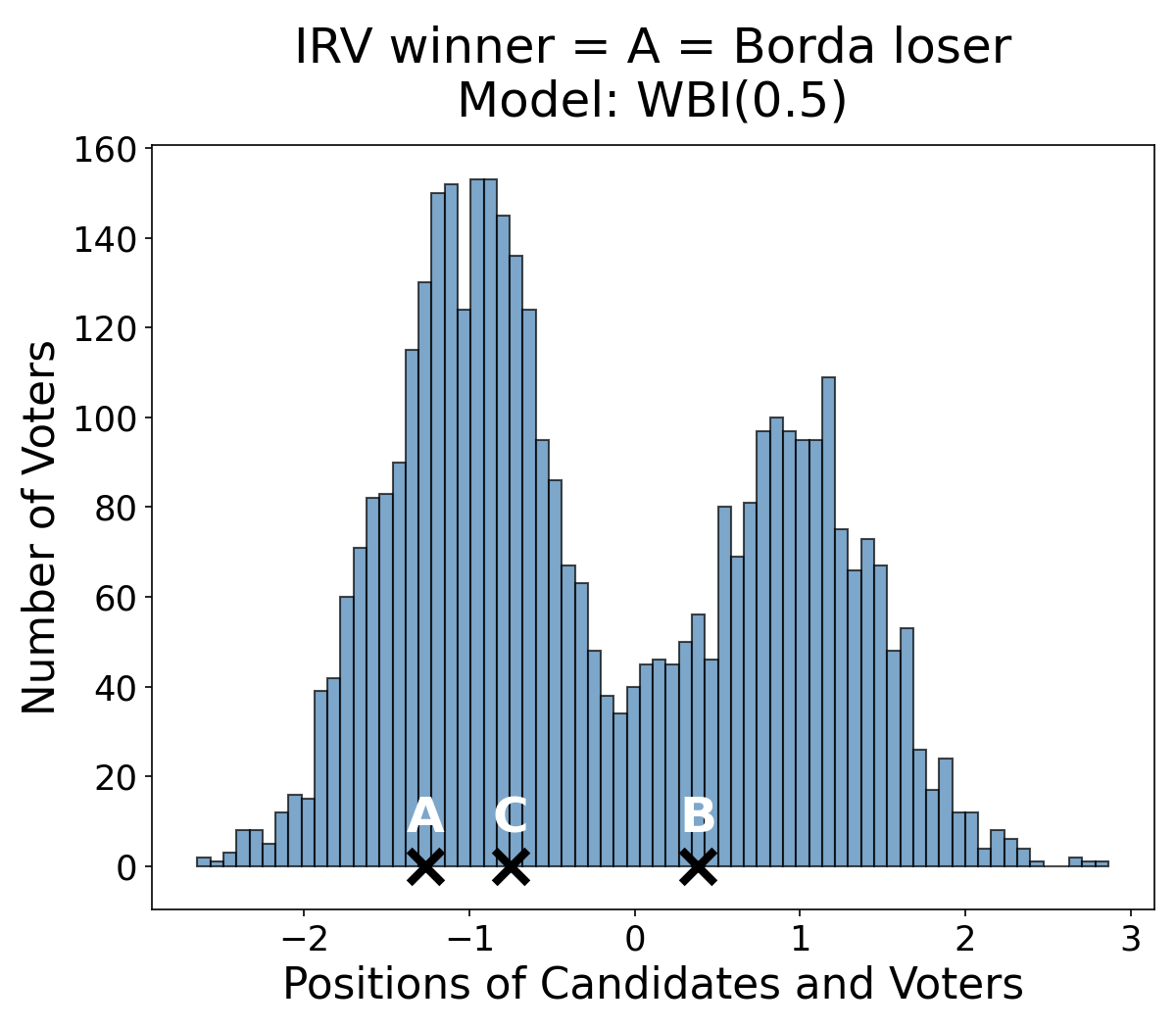}
  \end{subfigure}\hfill
  \begin{subfigure}[b]{0.48\textwidth}
    \centering
    
    \includegraphics[width=\linewidth]{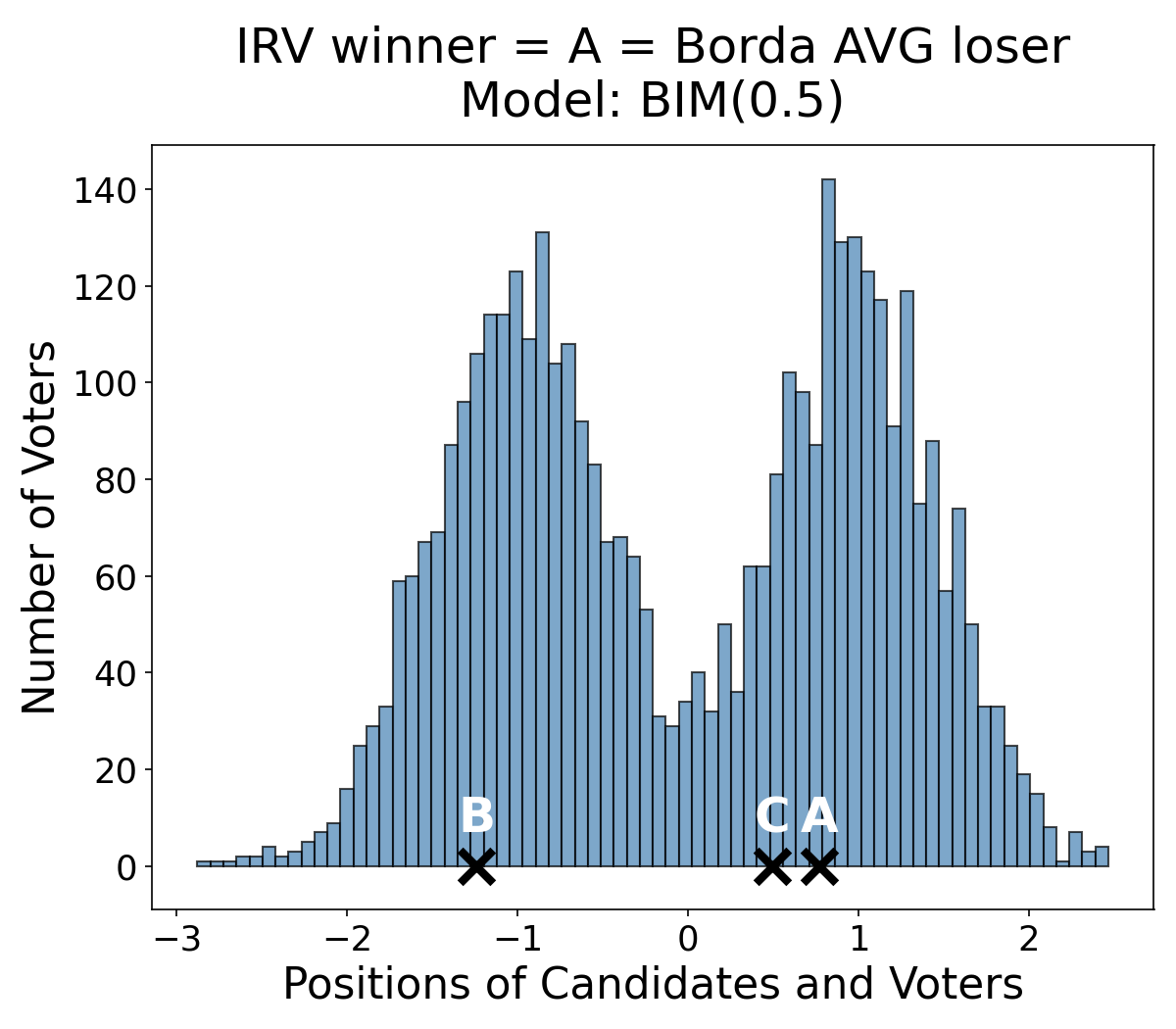}
  \end{subfigure}
  \caption{(Left) An election generated under the model WBI(0.5) where $A$ is the IRV winner and the Borda loser. (Right) An election generated under the model BIM(0.5) with partial ballots where the IRV winner $A$ is the Borda AVG loser.}
  \label{figure:1D_model_difference}
\end{figure}

\subsection{Differences between  Models and Data}
The broad range of models and sizable differences in probabilities even within the same model make it 
 difficult to make universal comparisons between  analytical, simulated, and empirical results. However, we can venture some general observations. The first thing to note is that the theoretical models do not consistently predict higher likelihoods of weak candidates being elected then what occurs in the data. This makes the analysis of IRV   different from other analyses. 

 In previous studies  of IRV, the  probabilities of anomalous behavior based on \emph{a priori} models such as IAC or IC, as well as different spatial models, tend to be much larger than those observed in real-world data. For example, theoretical models predict a much higher probability that an election demonstrates an upward monotonicity paradox than what occurs in actual elections \cite{GSM23, M17}. This appears not to be  true for the likelihood of IRV electing weak candidates, as least when partial ballots are allowed.

In the complete ballot case, the probability that IRV elects the Borda loser is generally lower  in the real-world data than under IAC, IC,  or the spatial models. However, the  probability that IRV selects the Bucklin loser or the candidate with the most last place votes is sometimes higher in the real elections than in simulated elections. For example, the probability  of electing the Bucklin loser in the \texttt{Scotland} dataset is higher than the same probability under IC, IAC, and several of the spatial models.
 The only simulated elections that return higher Bucklin loser probabilities occur under the model UNI $\times$ BIM($\sigma$) for small values of $\sigma$. The situation is similar for the candidate with the most last place votes. As mentioned previously, the larger probabilities for \texttt{Scotland} most likely reflect that most of these elections are not single-winner and have been reduced to three candidates.  In the other datasets the probabilities are almost always lower than what other models predict.

In the partial ballot case, the probabilities that IRV selects the various Borda losers or the Bucklin loser are similar between real-world and simulated elections. These probabilities are quite low, with the only values above 5\% occurring under the UNI $\times$ BIM($\sigma$) model for small $\sigma$. These results reinforce our previous analysis that in real-world elections it is unlikely for a Borda or Bucklin loser to be elected by IRV. We would expect to see such outcomes if an electorate is extremely polarized in some way, or if the election is very ``close'' (e.g. there is no Condorcet winner).

\subsection{Comparison to other voting methods}

In this section, we place our IRV results in a broader context by comparing the performance of IRV to two other voting methods: plurality and Condorcet. Under \text{plurality}, the candidate with the most first-place votes is the election winner. 
Under a \textbf{Condorcet method}, the Condorcet winner is  declared the  winner of the election, when such a candidate exists. When there is no Condorcet winner, an alternative criterion is used.  For our analysis we  use the   minimax method, which selects the candidate whose worst head-to-head loss is minimal when there is no Condorcet winner. 
However, a Condorcet winner exists in more than 99\% of both our simulated and real elections, and thus the choice of Condorcet method is largely immaterial. 

Figures \ref{figure:method_comparison_utility} and \ref{figure:method_comparison_Borda} in Appendix 1 show how IRV compares with  plurality and Condorcet with respect to the candidate of minimal utility and the Borda loser under two-dimensional models involving BIM($\sigma$). Figure  \ref{figure:method_comparison_utility} suggests that the plurality method is much more likely to elect the candidate of minimal utility. Condorcet generally performs slightly better than IRV; the difference is generally not substantial except when the electorate is highly polarized. The introduction of partial ballots causes both Condorcet and IRV to choose the candidate of minimal utility with higher frequency, and the methods tend to choose this candidate at approximately the same rate.

The situation is different for the Borda loser since the Condorcet winner can never be the Borda loser or Borda AVG under complete or partial ballots respectively; see \cite{FB25} for a proof that the Condorcet winner can never be the Borda AVG loser.  Figure \ref{figure:method_comparison_Borda} shows that Condorcet clearly outperforms the other two methods, even when measuring the probability of selecting the Borda OM loser. The figure also shows, somewhat surprisingly, that plurality performs well relative to IRV under models like BIM($\sigma$). This is likely because in a bimodal electorate that is highly polarized, in most instances, two candidates are firmly in one camp and the third candidate in the other. The latter is the plurality winner since they do not split the first-place votes of their supporters with any other candidate, and this strong first-place showing prevents them from being the Borda loser. Even though plurality sometimes outperforms IRV for this metric, plurality performs far worse than IRV with respect to the candidate with minimum social utility, and with respect to some additional criteria beyond the scope of this article. For example, plurality elects the Condorcet loser at a much higher rate than any  commonly-studied ranked voting rule.

While Condorcet dramatically outperforms IRV with respect to the  Borda loser, we note that Condorcet methods perform worse than IRV for metrics not considered by this article. For example, while IRV never elects the plurality loser,  Condorcet methods frequently do so under our models. Indeed, if an election contains three candidates and a Condorcet winner exists, IRV and Condorcet disagree if and only if the Condorcet winner is the plurality loser. (If this were not the case, then the Condorcet winner would win in the second round of IRV.)  Of course, electing a plurality loser may be desirable in some cases, but it is not clear that an electorate would frequently tolerate the election of the candidate ``with the fewest votes.''

\section{Conclusion}\label{section:conclusion}

Can ranked-choice voting (instant runoff voting) elect the weakest candidate? The answer is yes, for several natural notions of \emph{weakest}, in the three-candidate elections we study. This observation is not unique to IRV—any reasonable voting rule can fail under some criteria—but it establishes that such outcomes are neither purely theoretical nor confined to contrived examples. Across our models and empirical data, the frequency with which IRV elects a weak candidate is generally low, rising to non-negligible levels primarily in highly polarized electorates where voter preferences have a spatial structure.

There are several natural directions for future work. First, we could examine elections with four or more candidates. Since real elections rarely feature more than three viable candidates, we expect the frequency with which IRV elects the weakest candidate (under any measure) to remain small, though this remains to be quantified. Second, we could focus on a single notion of weakness and compare voting methods by the probability with which they fail under that criterion. For example, it would be interesting to identify which commonly studied voting rules minimize the probability of electing the candidate of minimal social utility. It would also be interesting to confirm that the results are robust based on different measures of social utility. Finally, we could examine how varying degrees of ballot truncation or strategic voting affect these results.

Overall, our work contributes to the voting literature by clarifying when and how IRV can fail under intuitive notions of weakness, and by identifying the electoral conditions under which such failures are likely to occur.\\


\section*{Appendix 1: Figures}
\begin{figure}[htbp]
    \centering

    \begin{minipage}{0.48\textwidth}
        \centering
        \includegraphics[width=\linewidth]{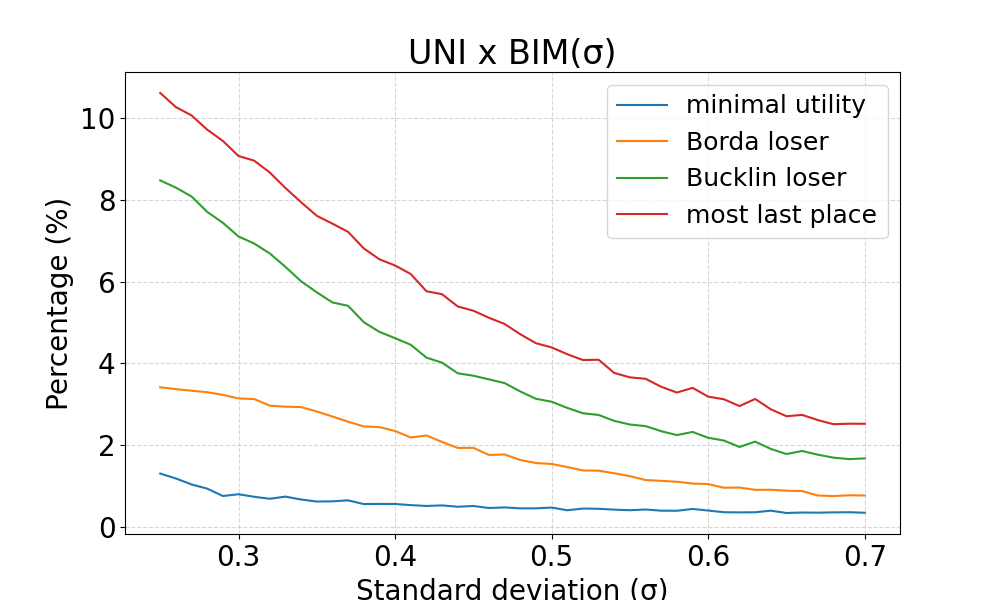}
    \end{minipage}
    \hspace{0.02\textwidth}
    \begin{minipage}{0.48\textwidth}
        \centering
        \includegraphics[width=\linewidth]{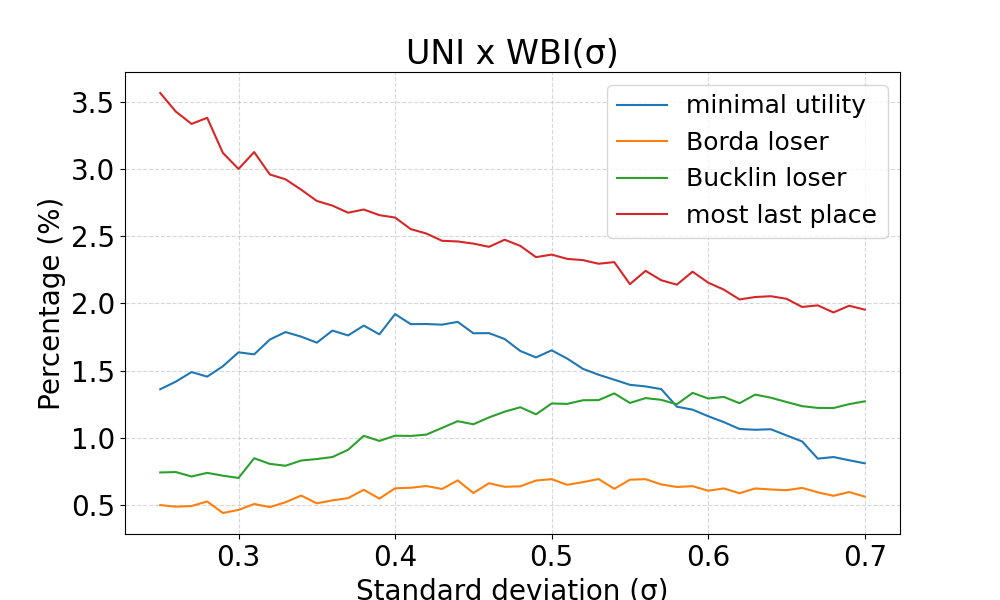}
    \end{minipage}

    \vspace{2mm} 

    \begin{minipage}{0.48\textwidth}
        \centering
        \includegraphics[width=\linewidth]{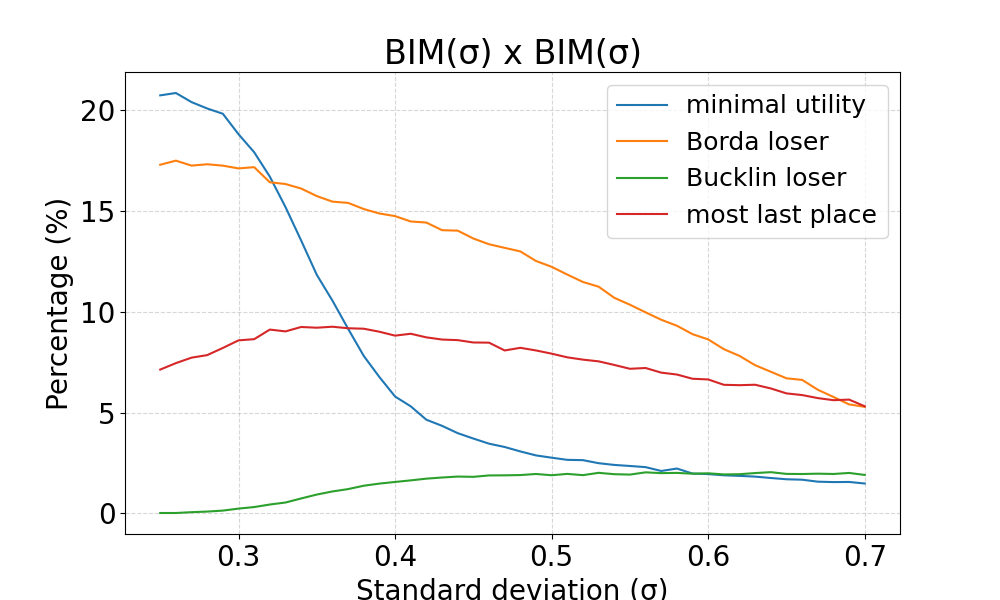}
    \end{minipage}
    \hspace{0.02\textwidth}
    \begin{minipage}{0.48\textwidth}
        \centering
        \includegraphics[width=\linewidth]{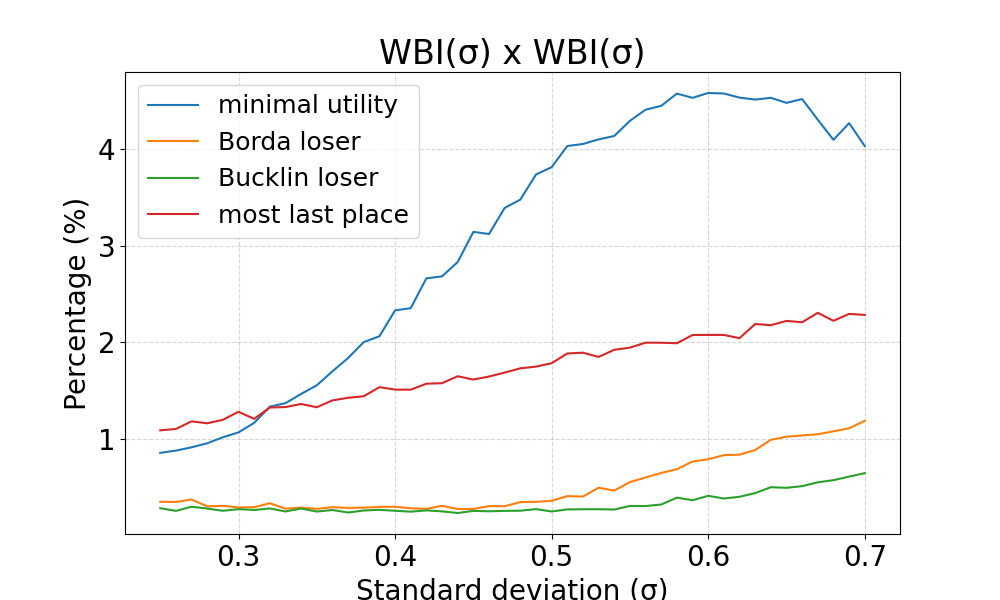}
    \end{minipage}

    \vspace{2mm} 

    \begin{minipage}{0.5\textwidth}
        \centering
        \includegraphics[width=\linewidth]{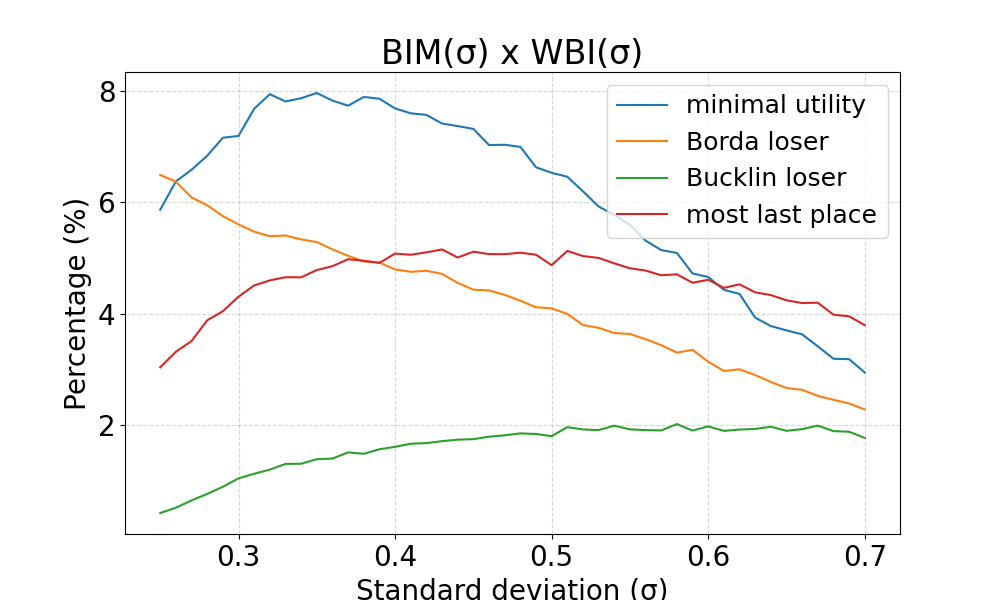}
    \end{minipage}

    \caption{The percentage of elections in which IRV elects various kinds of weakest candidate under 2D spatial models with complete ballots.}
    \label{figure:2D_complete_ballots}
\end{figure}

\begin{figure}[htbp]
    \centering

    \begin{minipage}{0.48\textwidth}
        \centering
        \includegraphics[width=\linewidth]{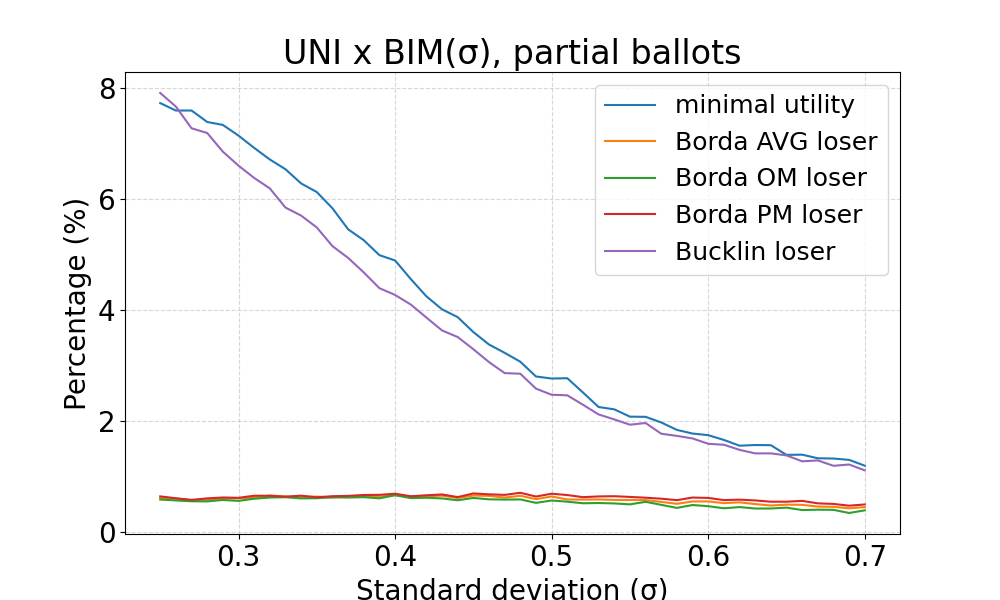}
    \end{minipage}
    \hspace{0.02\textwidth}
    \begin{minipage}{0.48\textwidth}
        \centering
        \includegraphics[width=\linewidth]{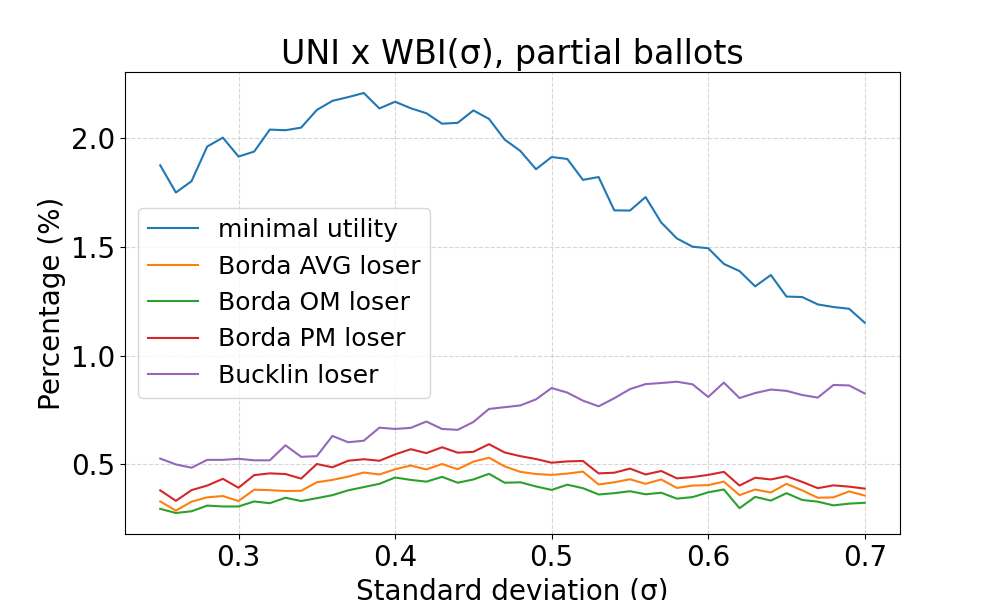}
    \end{minipage}

    \vspace{2mm} 

    \begin{minipage}{0.48\textwidth}
        \centering
        \includegraphics[width=\linewidth]{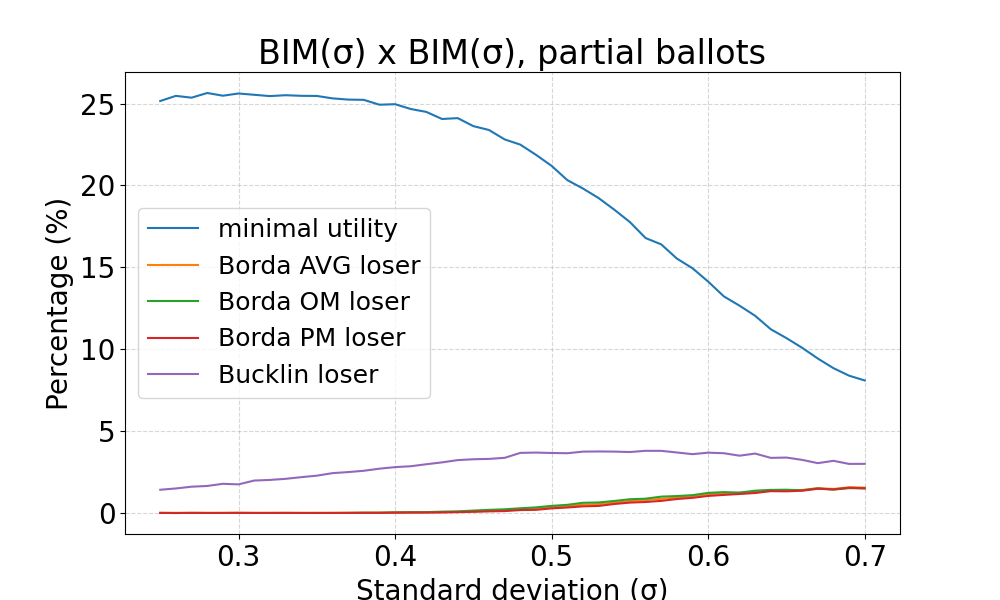}
    \end{minipage}
    \hspace{0.02\textwidth}
    \begin{minipage}{0.48\textwidth}
        \centering
        \includegraphics[width=\linewidth]{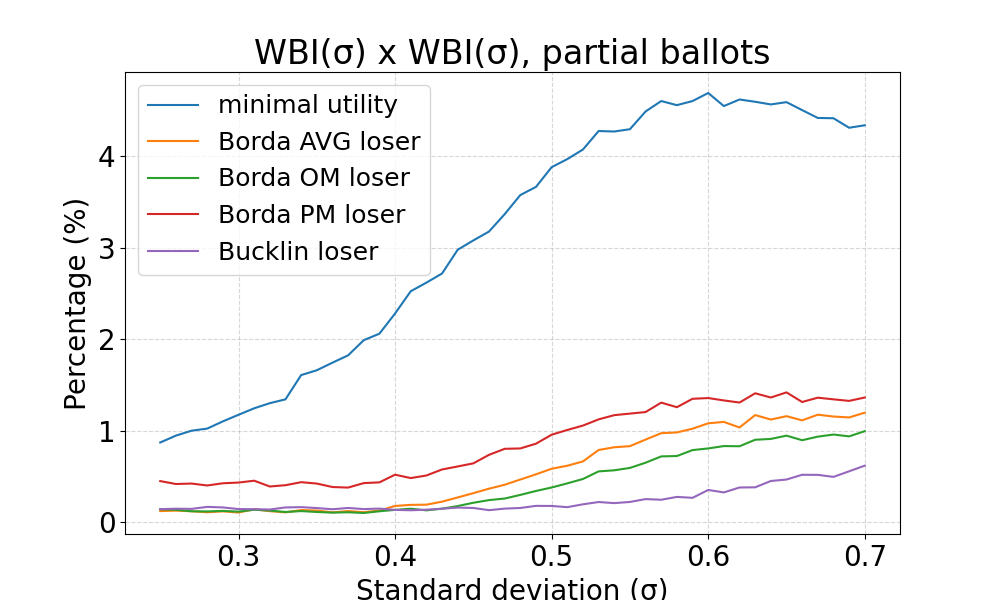}
    \end{minipage}

    \vspace{2mm} 

    \begin{minipage}{0.5\textwidth}
        \centering
        \includegraphics[width=\linewidth]{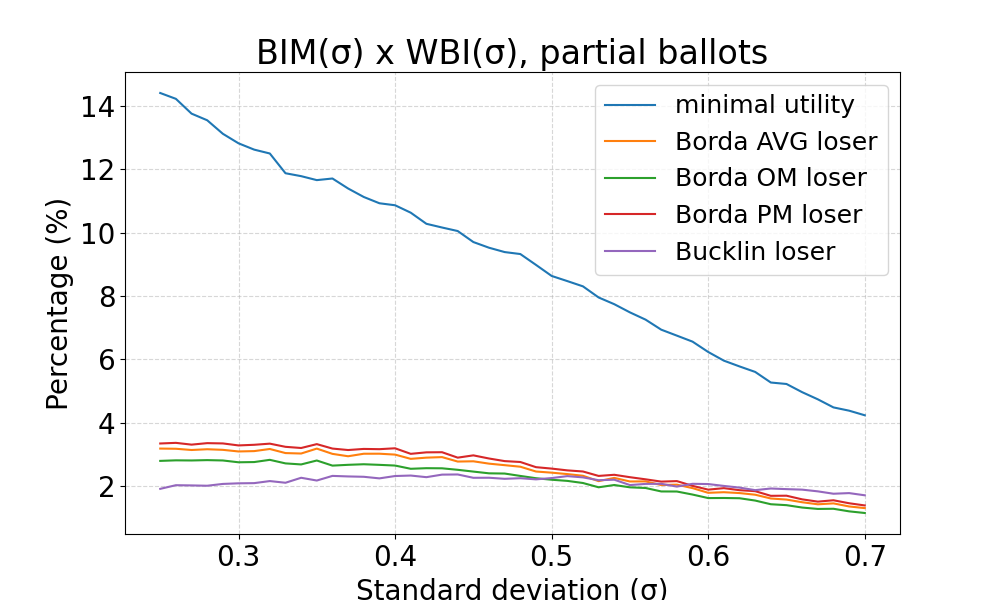}
    \end{minipage}

    \caption{The percentage of elections in which IRV elects various kinds of weakest candidate under 2D spatial models with partial ballots.}
    \label{figure:2D_partial_ballots}
\end{figure}

\begin{figure}[htbp]
  \centering
\includegraphics[width=130mm]{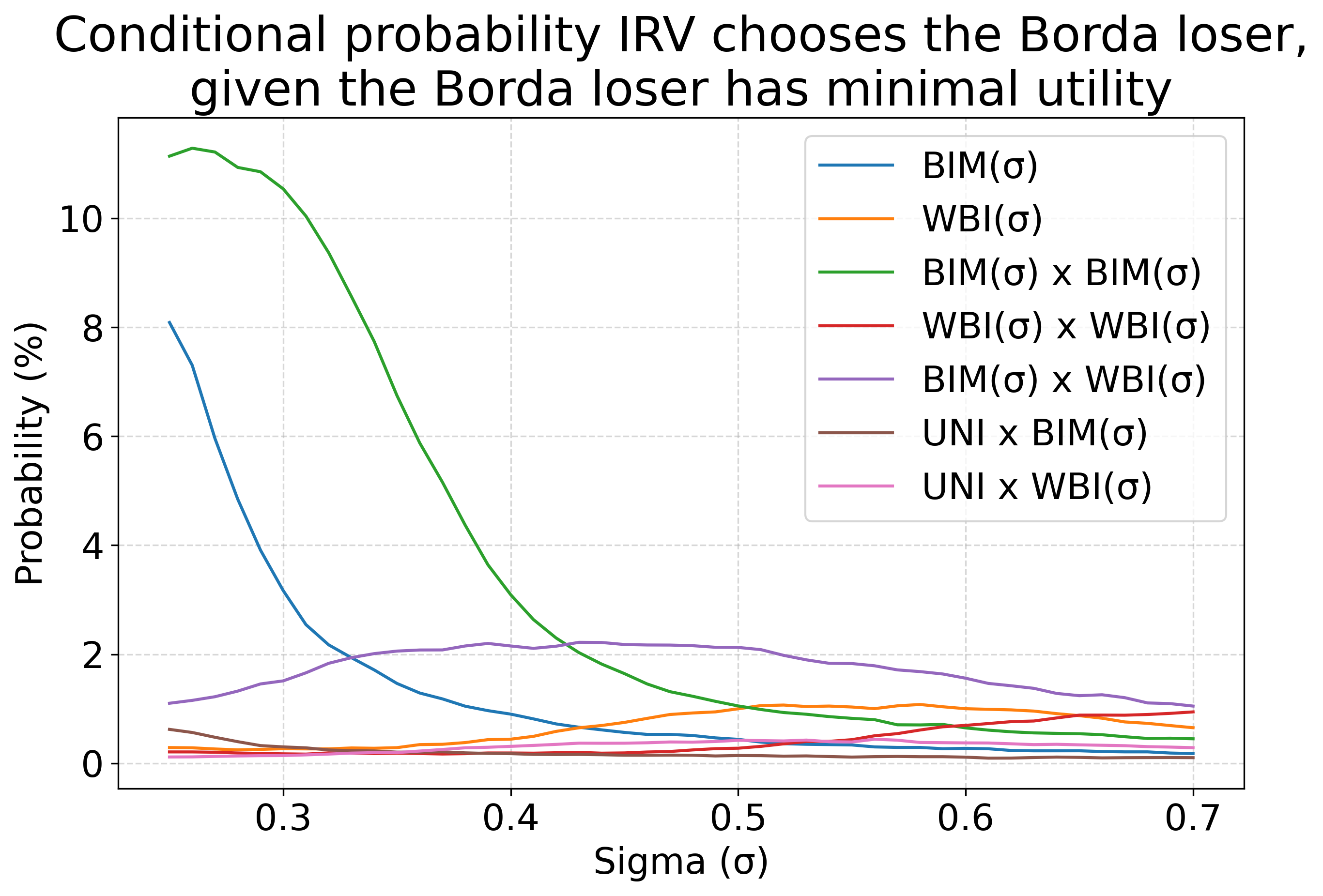}

\vspace{.1 in}

\includegraphics[width=130mm]{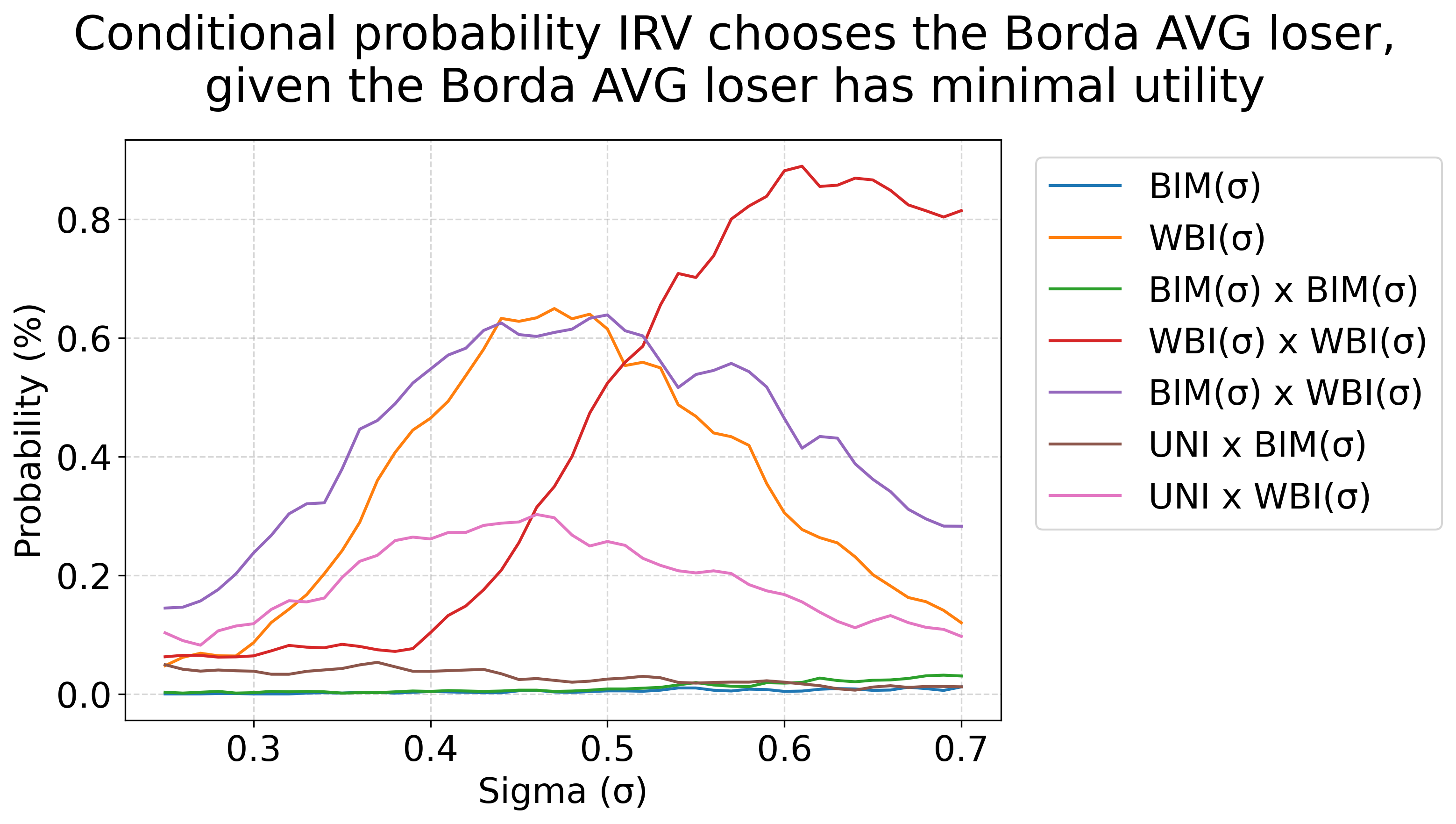}
\caption{(Top) The probability (expressed as a percentage) that IRV elects the Borda loser, assuming the Borda loser is the candidate of minimal social utility with complete  ballots; (Bottom) the probability that IRV elects the  Borda AVG loser, with partial ballots.}
\label{figure:cond_prob_Borda_is_minimal_utility}

  \end{figure}

\begin{figure}[htbp]
    \centering
\begin{minipage}{0.485\textwidth}
        \centering
        \includegraphics[width=\linewidth]{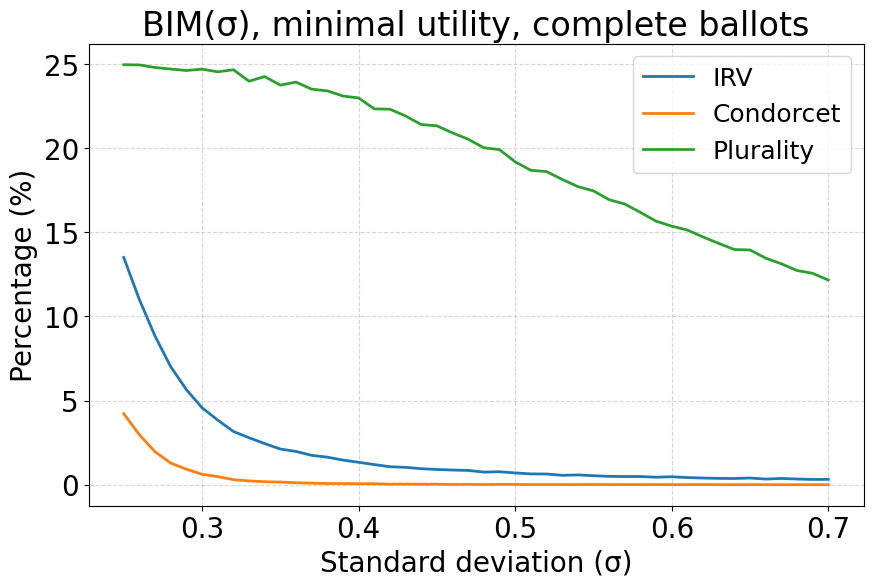}
    \end{minipage}
    \begin{minipage}{0.485\textwidth}
        \centering
        \includegraphics[width=\linewidth]{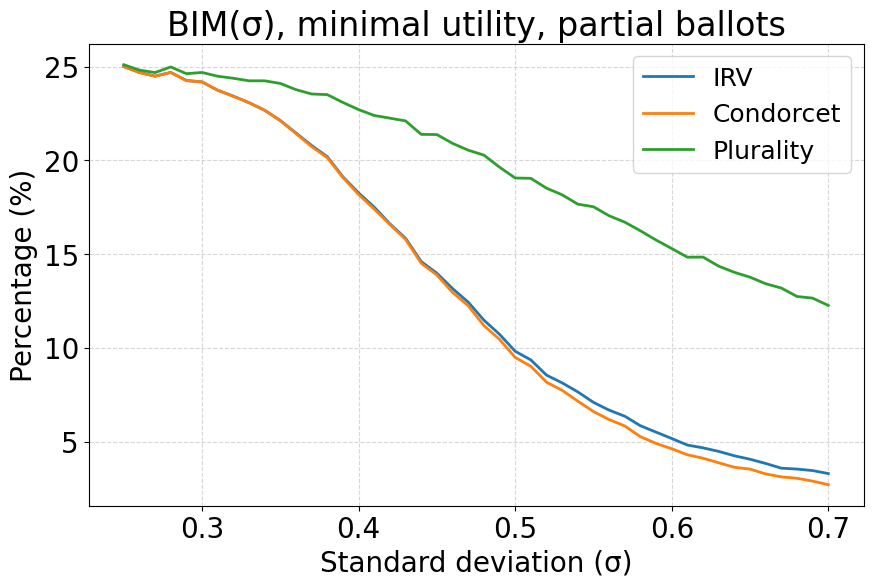}
    \end{minipage}

    \vspace{.1 in}
    \begin{minipage}{0.485\textwidth}
        \centering
        \includegraphics[width=\linewidth]{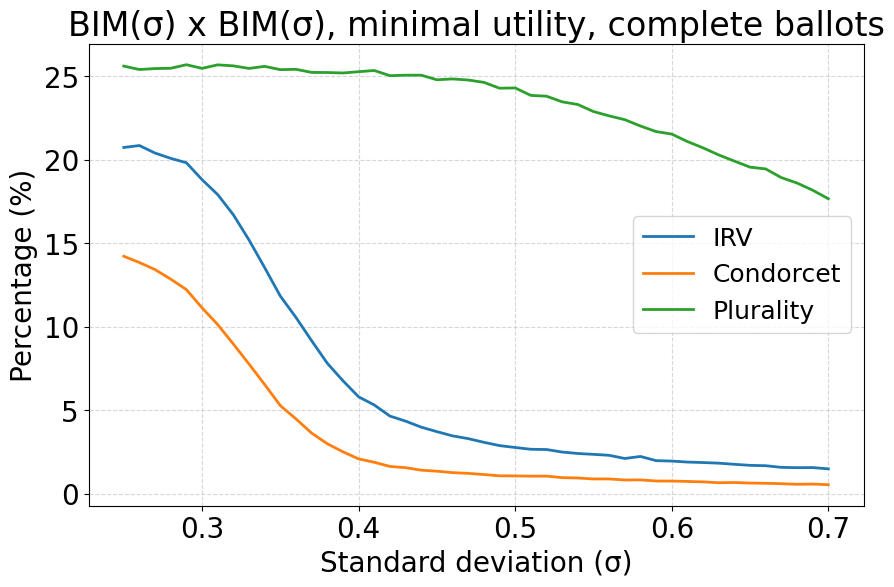}
    \end{minipage}
    \begin{minipage}{0.485\textwidth}
        \centering
        \includegraphics[width=\linewidth]{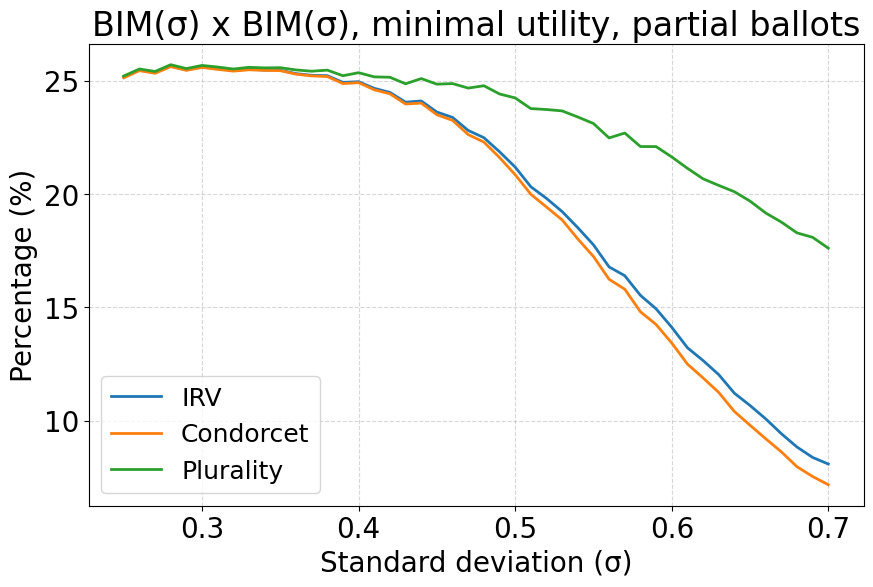}
    \end{minipage}
    \vspace{.1 in}
\begin{minipage}{0.485\textwidth}
        \centering
        \includegraphics[width=\linewidth]{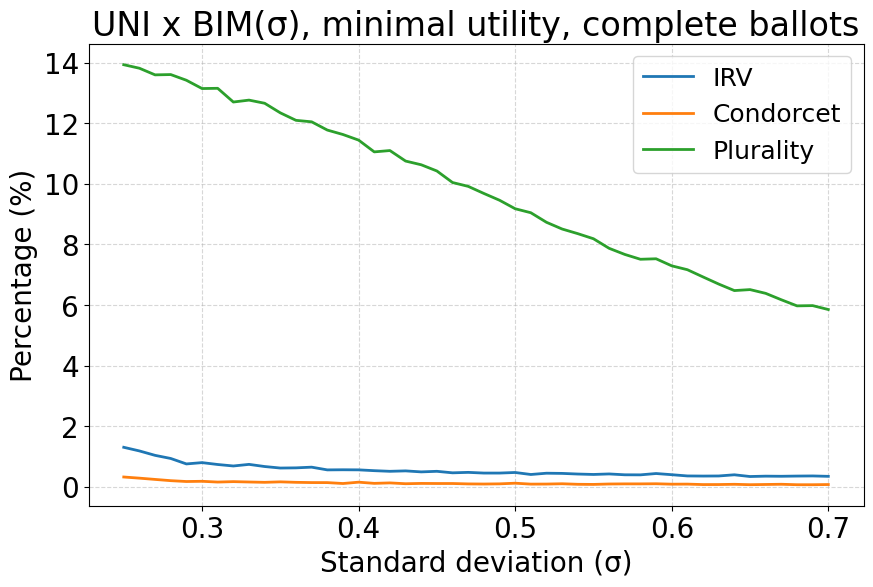}
    \end{minipage}
    \begin{minipage}{0.485\textwidth}
        \centering
        \includegraphics[width=\linewidth]{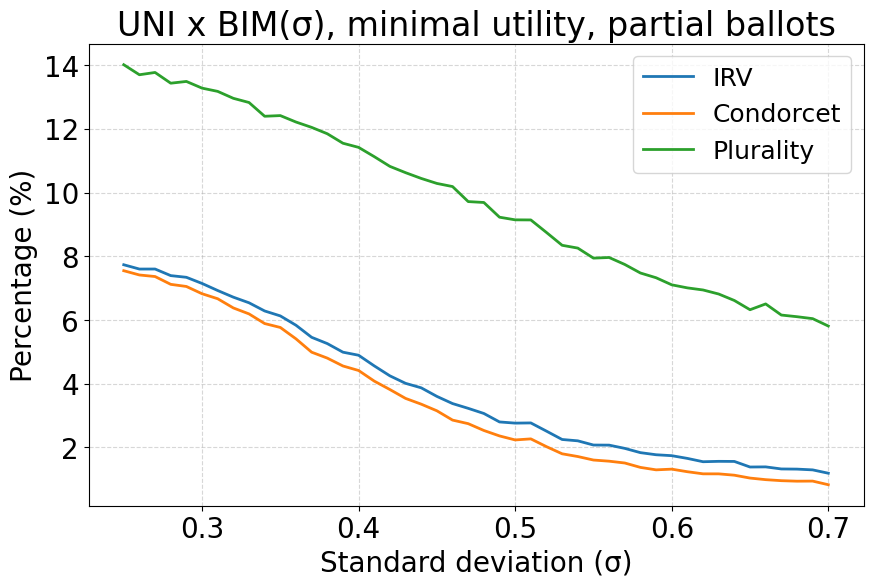}
    \end{minipage}

    \caption{A comparison of the performance of the voting rules IRV, plurality, and Condorcet for the measure of minimal social utility.}
    \label{figure:method_comparison_utility}
\end{figure}

\begin{figure}[htbp]
    \centering
\begin{minipage}{0.485\textwidth}
        \centering
        \includegraphics[width=\linewidth]{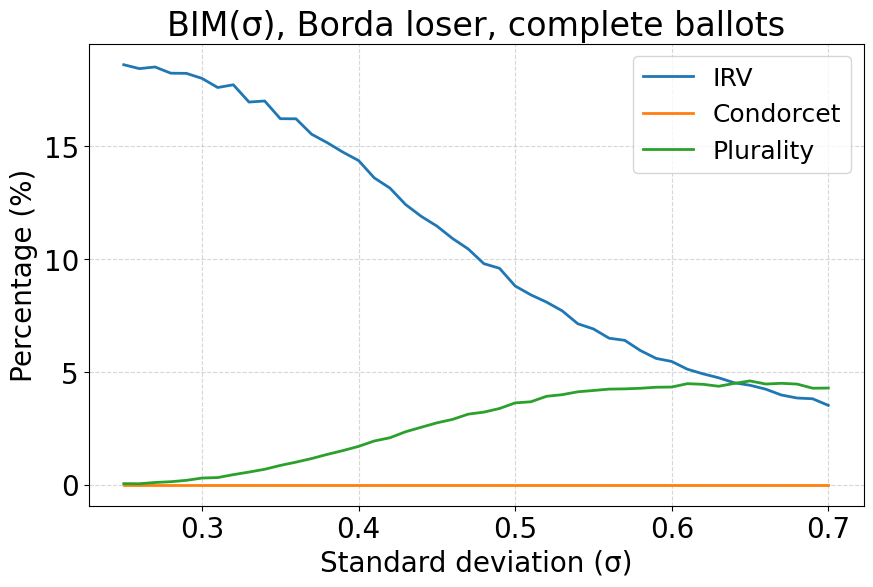}
    \end{minipage}
    \begin{minipage}{0.485\textwidth}
        \centering
        \includegraphics[width=\linewidth]{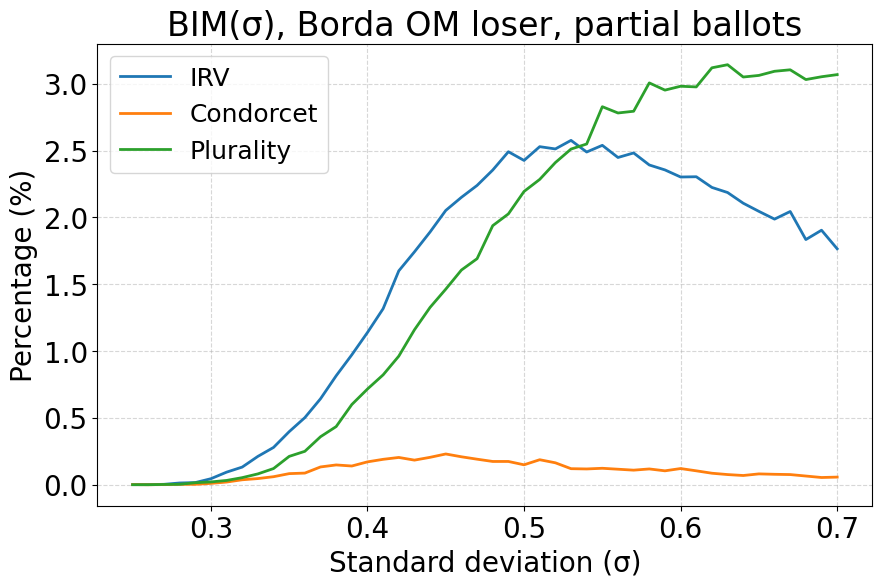}
    \end{minipage}

    \vspace{.1 in}
    \begin{minipage}{0.485\textwidth}
        \centering
        \includegraphics[width=\linewidth]{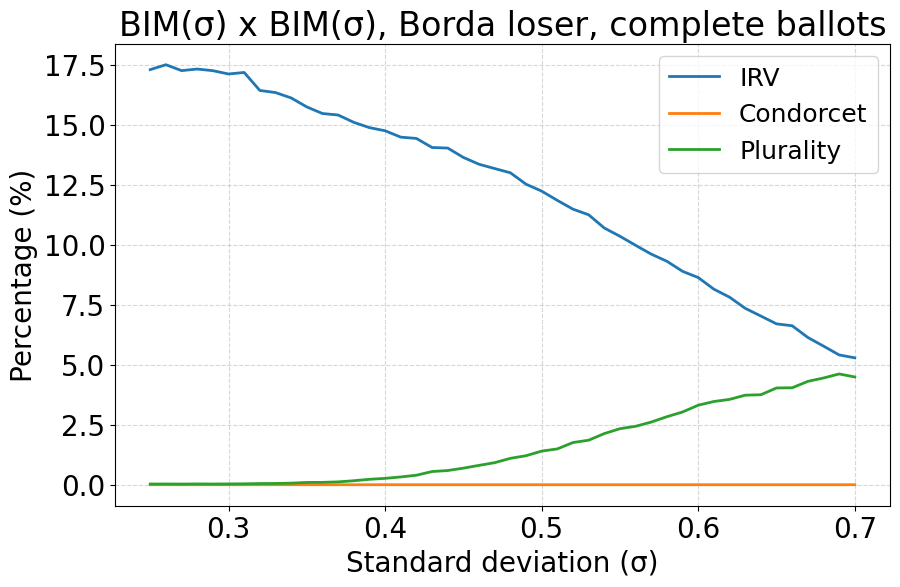}
    \end{minipage}
    \begin{minipage}{0.485\textwidth}
        \centering
        \includegraphics[width=\linewidth]{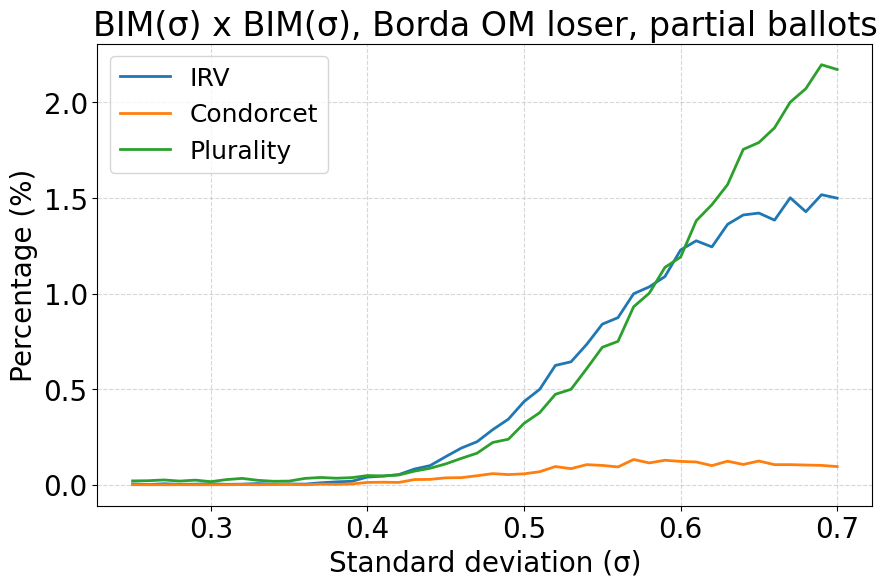}
    \end{minipage}
    \vspace{.1 in}
\begin{minipage}{0.485\textwidth}
        \centering
        \includegraphics[width=\linewidth]{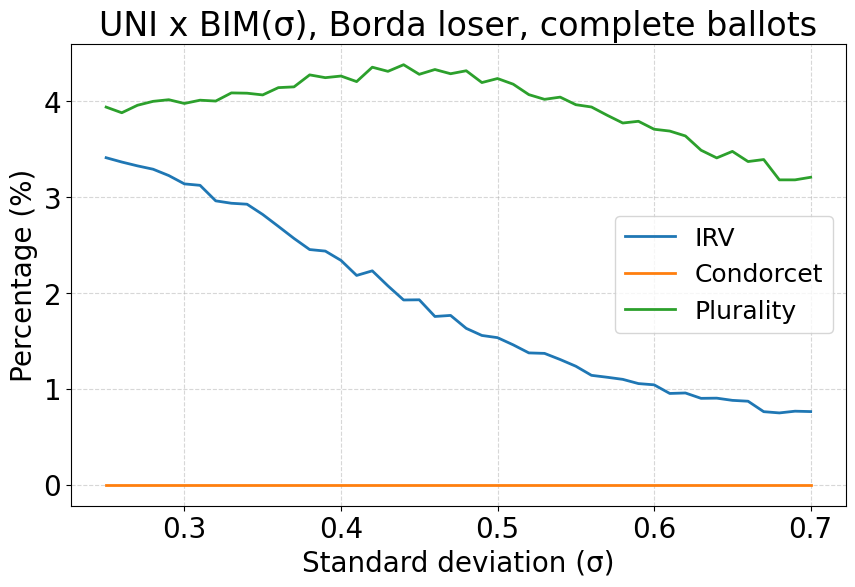}
    \end{minipage}
    \begin{minipage}{0.485\textwidth}
        \centering
        \includegraphics[width=\linewidth]{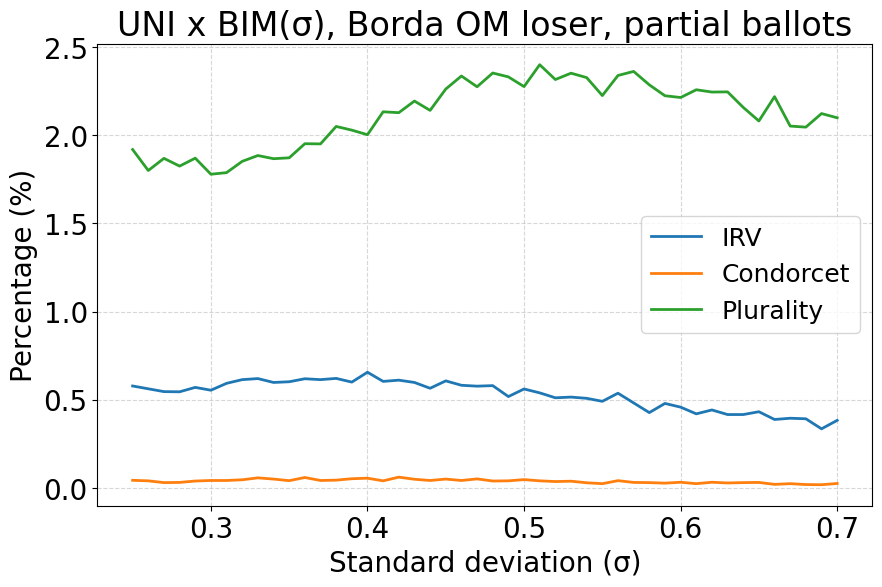}
    \end{minipage}

    \caption{A comparison of the performance of the voting rules IRV, plurality, and Condorcet for the measure of Borda loser for complete ballots and Borda OM loser for partial ballots.}
    \label{figure:method_comparison_Borda}
\end{figure}

\begin{figure}[htbp]
  \centering
  \begin{subfigure}[b]{0.48\textwidth}
    \centering
    \includegraphics[width=\linewidth, height=65mm]{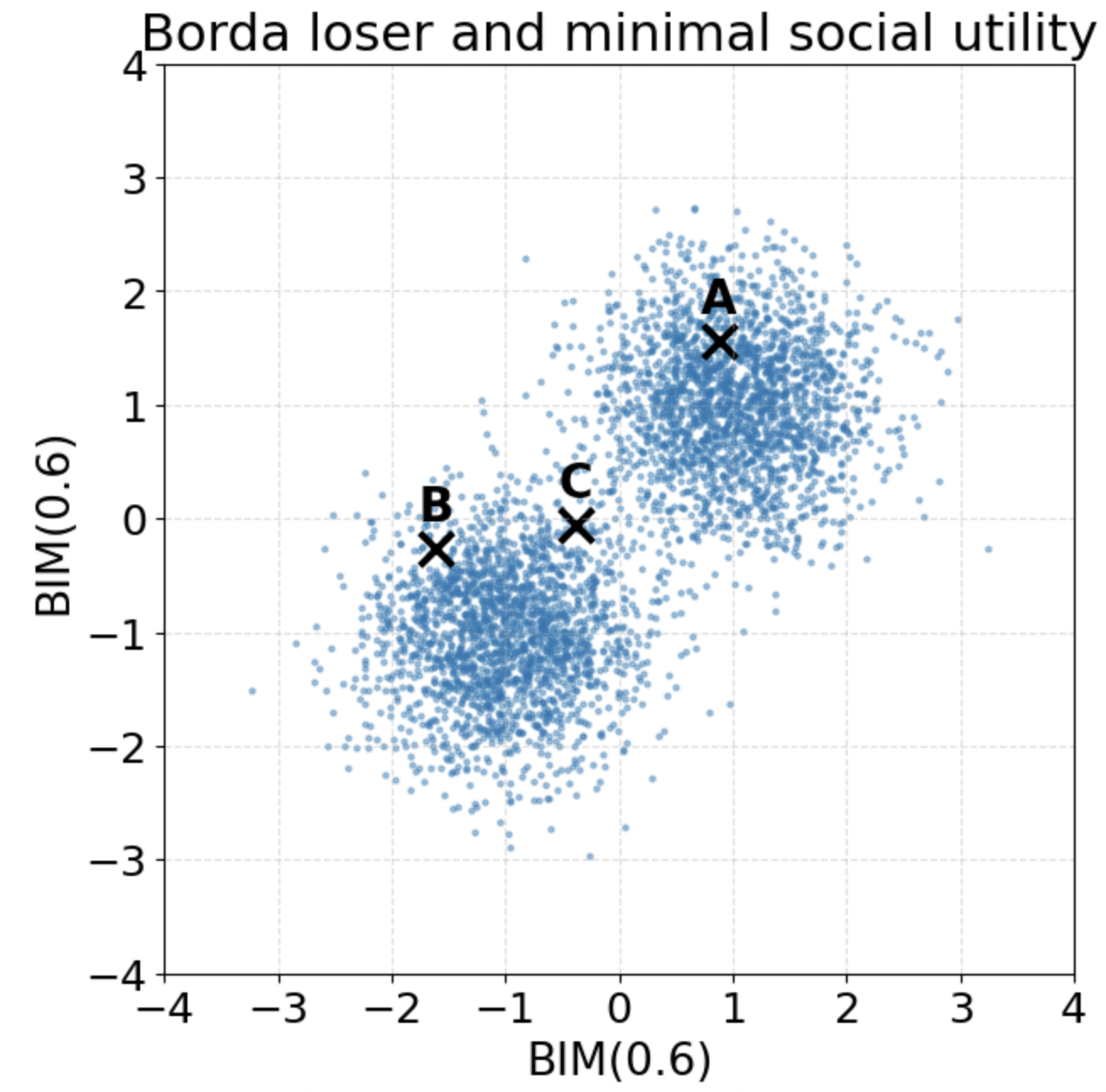}
  \end{subfigure}\hfill
  \begin{subfigure}[b]{0.48\textwidth}
    \centering
    \includegraphics[width=\linewidth, height=65mm]{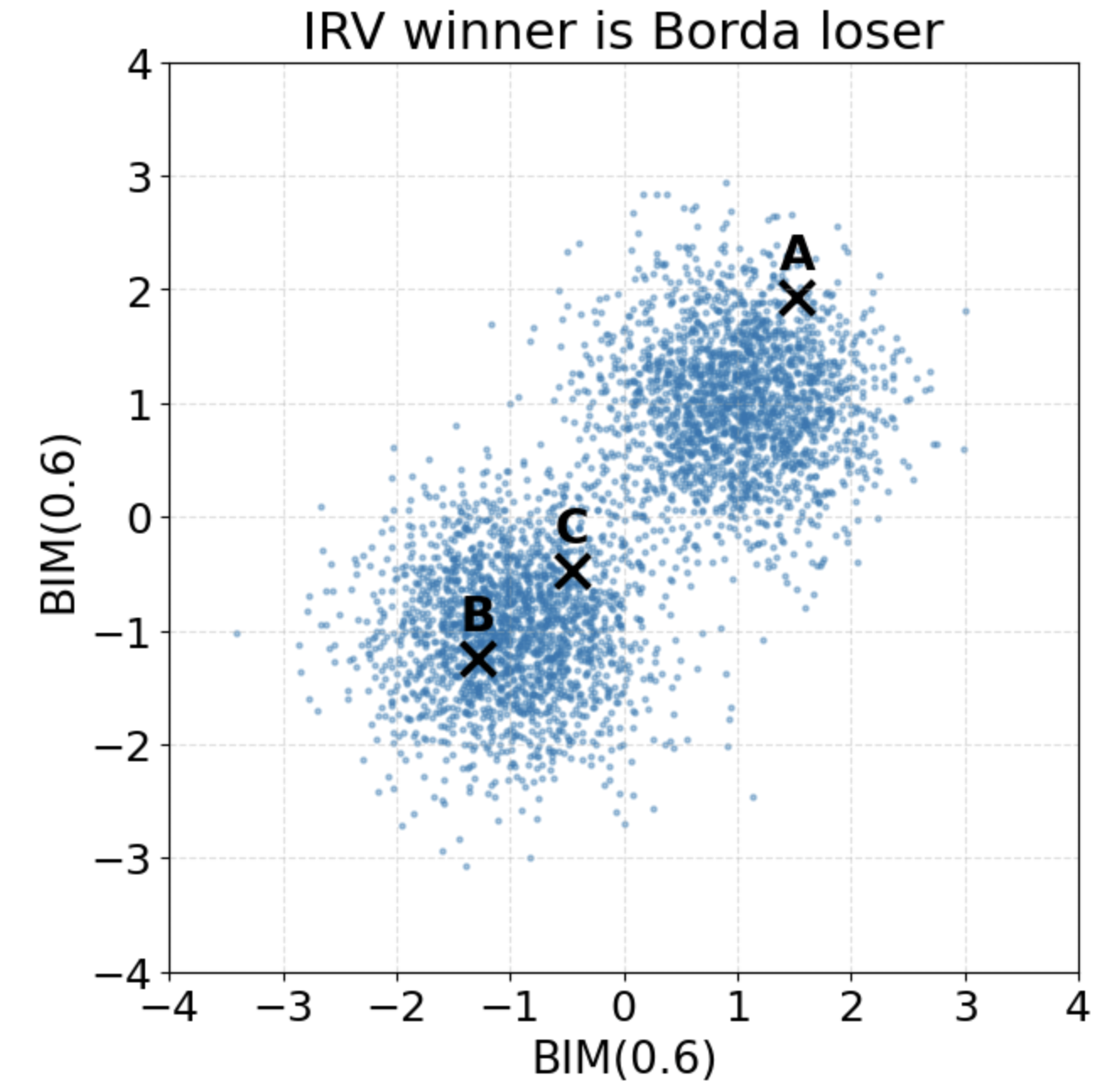}
  \end{subfigure}
  \caption{(Left) Two different elections generated under the model BIM(0.6) $\times$ BIM(0.6) where the IRV winner $B$ is the Borda loser. In the election on the Left, $B$ is also the candidate that  minimizes social utility.}\label{figure:2D_center_squeeze}
\end{figure}

\clearpage
\section*{Appendix 2: IC calculations }

\begin{proposition}  (i) Under the IC model, the limiting probability that the IRV winner coincides with the Borda loser is approximately 2.811\%. (ii) The limiting probability that the IRV winner coincides with the Bucklin loser is approximately 7.174\%.

\end{proposition}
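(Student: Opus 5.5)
We will follow the approach already begun in the proof of Proposition \ref{IC_prob}. By the central limit theorem, under IC the normalized vector $(a_1,a_2,b_1,b_2,c_1,c_2)$ is asymptotically Gaussian with covariance proportional to $I-\frac16 J$. The probability of any event cut out by finitely many homogeneous linear inequalities therefore equals the normalized solid angle of the corresponding polyhedral cone in the 5-dimensional subspace orthogonal to $(1,\dots,1)$; this is the setup of \cite{ST}. We fix the plurality order $A\succ B\succ C$ and multiply the final answer by $6$ for the six possible orders. Majority-winner events have probability $0$ in the limit, so only the inequalities in (\ref{a_wins}), (\ref{b_wins}), (\ref{a_round2}) and (\ref{b_round2}) matter.

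\textbf{Part (i).} We split into two cases according to whether $B$ or $A$ is the IRV winner.

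For the case where $B$ wins, Step 1 gives $P(A\succ B\succ C,\ B\succ_C A)\approx 0.0408$. From this we subtract the probabilities of the cones obtained by adding one Borda inequality, $BSc(B)>BSc(A)$ or $BSc(B)>BSc(C)$, and add back the cone containing both, by inclusion--exclusion. Step 2 computes one of these four-constraint cones, $BSc(A)>BSc(B)$, and gives $0.01342$. The cone with $BSc(C)>BSc(B)$ is handled the same way. In each case we deform the new hyperplane with a parameter $t$ so that at $t=0$ it coincides with, or becomes redundant to, an existing face. We then integrate the Schl\"afli formula, using the dihedral angles $\alpha_{jk}(t)$ and the edge lengths computed from the vertices $P_{ijk}(t)$.

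The five-constraint cone (both Borda inequalities) lives on $S^4$. Here Schl\"afli expresses $dVol_4$ in terms of 2-dimensional faces $S_j\cap S_k$, which are spherical triangles or quadrilaterals. Their areas are obtained from Gauss--Bonnet as angle sums, exactly as in Step 1, and the result is then integrated numerically in $t$.

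The case where $A$ wins ($A\succ_C B$ with $A$ the Borda loser) is treated identically: start from the three-constraint cone given by Gauss--Bonnet, then append the two Borda constraints one at a time.

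Summing the two cases and multiplying by 6 should give $2.811\%$. As a sanity check, we will also Monte Carlo sample Gaussian vectors.

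\textbf{Part (ii).} By the remark before Proposition \ref{IC_prob}, only (\ref{a_round2}) and (\ref{b_round2}) contribute. The Bucklin-loser condition is two linear inequalities comparing first-plus-second totals, e.g.\ $b_2+c_2>a_2+c_1$ and $b_2+c_2>a_1+b_1$ for $A$. We follow the same pipeline: a three-constraint base cone computed by Gauss--Bonnet, then the two extra hyperplanes added by one-parameter Schl\"afli deformations, with $Vol_3$ and $Vol_4$ of the relevant faces.

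\textbf{Main obstacle.} The main obstacle is the five-hyperplane cones on $S^4$. The hard part is choosing a deformation path along which the combinatorial type of the polytope does not change, so that no vertex leaves and no face degenerates mid-integral. We also need the 2-dimensional face areas in closed form as functions of $t$. If a single path fails, we will first try splitting $t\in[0,1]$ at the critical value and recomputing the face structure on each piece. Failing that, we will triangulate the cone into orthoschemes whose volumes are tabulated.

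The integrals themselves will be evaluated numerically, which is why the statement gives approximate values. The small discrepancy between $7.174\%$ here and $7.178\%$ in Proposition \ref{IC_prob} suggests that numerical precision at this step needs care, so we will cross-check against direct numerical integration of the Gaussian measure.
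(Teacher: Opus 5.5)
Your plan is essentially the paper's proof: the Gaussian solid-angle reduction, Gauss--Bonnet for the three-constraint base cone, Schl\"afli deformations of the new hyperplane starting from an existing face, the split by IRV winner, and the factor $6$. The one difference is that your four-term inclusion--exclusion costs an extra cone computation, whereas the paper takes the Step 2 cone with $BSc(A)>BSc(B)$ and subtracts the single five-constraint cone obtained by adding $BSc(B)>BSc(C)$ (Steps 3--4). Also, the paper's own computation for (ii) gives $6\times(0.0024298360+0.0095329879)\approx 0.07178$, in agreement with the $7.178\%$ in the main text and table, so the $7.174\%$ in this statement appears to be a misprint rather than a precision issue that more careful numerics would resolve.
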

\vskip 2mm

\noindent \textbf{Proof} 
We continue the proof of (i), following the numbered steps from the text.

\vskip 2mm 
\noindent \textbf{Step 3} Finding $P(A\succ B \succ C, B \succ_C A, BSc(A)>BSc(B), \mbox{ and }BSc(B)>BSc(C))$

This requires the  inequalities:
\begin{align*}
a_1+a_2 &> b_1+b_2 \quad  \mbox{ with normal vector } \mathbf v_1 = (1, 1, -1, -1, 0, 0), \\
b_1+b_2 &> c_1+c_2 \quad  \mbox{ with normal vector } \mathbf v_2 = (0,0, 1, 1, -1, -1),\\
b_1+b_2+c_2&>a_1+a_2+c_1  \quad  \mbox{ with normal vector } \mathbf v_3 = (-1,-1, 1, 1, -1, 1), \\
a_1+a_2+\frac{1}{2}(b_1+c_1)  & > b_1+b_2+\frac{1}{2}(a_1+c_2)\quad  \mbox{ with normal vector } \mathbf v_4 = (\frac{1}{2}, 1, -\frac{1}{2}, -1, \frac{1}{2}, -\frac{1}{2}),
\end{align*}
and the additional condition
$$ b_1+b_2+\frac{t}{2}(a_1+c_2)  > c_1+c_2+\frac{t}{2}(a_2+b_2)\quad   \mbox{ with normal vector }  \mathbf v_5 = (\frac{t}{2},  -\frac{t}{2}, 1, 1- \frac{t}{2}, -1,  \frac{t}{2}-1).$$
Again,  if $t=0$, this reduces to a previous condition; if $t=1$, we recover the condition that $BSc(B)>BSc(C)$.
Finally, we add the   orthogonal vector
$\mathbf v_6=(1, 1, 1, 1, 1, 1).$

As before, we obtain 
\begin{align*}
\alpha_{15}& = \cos^{-1}\left(-\frac{(1, 1, -1, -1, 0, 0) \cdot  (\frac{t}{2},  -\frac{t}{2}, 1, 1- \frac{t}{2}, -1,  \frac{t}{2}-1)  }{ 2\sqrt{4-2t+t^2}}\right)=\cos^{-1}\left(\frac{4-t}{4\sqrt{4-2t+t^2}}\right), \\
\alpha_{25} &= \cos^{-1}\left(-\frac{ (0,0, 1, 1, -1, -1) \cdot  (\frac{t}{2},  -\frac{t}{2}, 1, 1- \frac{t}{2}, -1,  \frac{t}{2}-1)  }{ 2\sqrt{4-2t+t^2}}\right)=\cos^{-1}\left(\frac{t-4}{2\sqrt{4-2t+t^2}}\right), \\
\alpha_{35}&= \cos^{-1}\left(-\frac{(-1,-1, 1, 1, -1, 1) \cdot  (\frac{t}{2},  -\frac{t}{2}, 1, 1- \frac{t}{2}, -1,  \frac{t}{2}-1)}{\sqrt{6}\sqrt{4-2t+t^2}}\right)=\cos^{-1}\left(-\frac{\sqrt{2}}{\sqrt{3}\sqrt{4-2t+t^2}}\right), \\
\alpha_{45}&= \cos^{-1}\left(-\frac{(\frac{1}{2}, 1, -\frac{1}{2}, -1, \frac{1}{2}, -\frac{1}{2}) \cdot  (\frac{t}{2},  -\frac{t}{2}, 1, 1- \frac{t}{2}, -1,  \frac{t}{2}-1)}{\sqrt{3}\sqrt{4-2t+t^2}}\right)=\cos^{-1}\left(\frac{\sqrt{3}}{2\sqrt{4-2t+t^2}}.\right)
\end{align*}
So
\begin{align*}
 d\alpha_{15}& = \frac{ \sqrt{3}t }{ (t^2-2t+4) \sqrt{5t^2-8t+16}  },\\
  d\alpha_{25}& =\frac{- \sqrt{3} }{ (t^2-2t+4) },\\
   d \alpha_{35}&= \frac{ \sqrt{2}(1-t)}{(t^2-2t+4)\sqrt{3t^2-6t+10}}, \mbox{  and}\\
    d \alpha_{45}&= \frac{ \sqrt{3}(t-1)}{(t^2-2t+4)\sqrt{4t^2-8t+13}}.
\end{align*}

Let $P_{ijkl}$ be the vertex lying at  the intersection of $S_i$, $S_j$, $S_k$ and $S_l$. A similar calculation to that done previously gives $P_{1235}=(-1, 1, 2, -2, 0, 0) $, $P_{1245}=(-1, 1, 1, -1, -1, 1)$, $P_{1345}=(t-2, 2, t-2, 2, -t, -t)$, and $P_{2345}=(1, 1, -2, 1, -2, 1).$ Each set of three of these points defines a triangle on the surface of the unit sphere in  $R^3$. Following the methodology of \cite{KM}, we calculate the volume  of these triangles as follows.  

Beginning with the triangle with vertices $P_{1235}, P_{1245}$ and $P_{1345}$, let $\beta_1= \angle(P_{1235}, P_{1245}), \beta_2= \angle(P_{1235}, P_{1345}),$ and $ \beta_3= \angle(P_{1245}, P_{1345})$. 
Then
\begin{align*}
\cos(\beta_1)&=\frac{(-1, 1, 2, -2, 0, 0)  \cdot (-1, 1, 1, -1, -1, 1)}{\sqrt{10}\cdot \sqrt{6}}=\frac{2}{2\sqrt{10}}=\frac{\sqrt{3}}{\sqrt{5}}, \\
\cos(\beta_2)&=\frac{(-1, 1, 2, -2, 0, 0) \cdot (t-2, 2, t-2, 2, -t, -t) }{\sqrt{10}\cdot 2\sqrt{t^2-2t+4}}=\frac{t-4}{2\sqrt{10}\sqrt{t^2-2t+4}}, \mbox{  and}\\
\cos(\beta_3)&=\frac{(-1, 1, 1, -1, -1, 1) \cdot (t-2, 2, t-2, 2, -t, -t)}{\sqrt{6}\cdot 2 \sqrt{t^2-2t+4}} =0.\end{align*}
If the angle at  $P_{ijkl}$ is denoted  $\gamma_{ijkl}$, this implies
\begin{align*}
\cos(\gamma_{1235}) &= \frac{\cos(\beta_4 )-\cos(\beta_1)\cos(\beta_2)}{\sin(\beta_1)\sin(\beta_2)}  = \frac{0 - (\frac{\sqrt{3}}{\sqrt{5}} )(\frac{t-4}{2\sqrt{10}\sqrt{t^2-2t+4}})}
{(\frac{\sqrt{2}}{\sqrt{5}})(\frac{\sqrt{3}\sqrt{13t^2-24t+48}}{2\sqrt{10}\sqrt{t^2-2t+4}})}=\frac{4-t}{\sqrt{2}\sqrt{13t^2-24t+48}},\\
\cos(\gamma_{1245}) &= \frac{\cos(\beta_2)-\cos(\beta_1)\cos(\beta_4)}{\sin(\beta_1)\sin(\beta_4)}  = \frac{\frac{t-4}{2\sqrt{10}\sqrt{t^2-2t+4}}-( \frac{\sqrt{3}}{\sqrt{5}})(0)}{(\frac{\sqrt{2}}{\sqrt{5}})(1)}=\frac{t-4}{4\sqrt{t^2-2t+4}}, \mbox{  and}  \\
\cos(\gamma_{1345}) &= \frac{\cos(\beta_1)-\cos(\beta_2)\cos(\beta_4)}{\sin(\beta_2)\sin(\beta_4)}  = \frac{\frac{\sqrt{3}}{\sqrt{5}} -(\frac{t-4}{2\sqrt{10}\sqrt{t^2-2t+4}} )(0)}{\frac{\sqrt{3}\sqrt{13t^2-24t+48}}{2\sqrt{10}\sqrt{t^2-2t+4}}(1)}=\frac{2\sqrt{2}\sqrt{t^2-2t+4}}{\sqrt{13t^2-24t+48}}.
\end{align*}
Thus, by  the Gauss-Bonnet Theorem, 
\begin{equation*}
Vol_{2} (S_1 \cap S_5) = \cos^{-1}( \frac{4-t}{\sqrt{2}\sqrt{13t^2-24t+48}} )+\cos^{-1}( \frac{t-4}{4\sqrt{t^2-2t+4}} )+\cos^{-1}( \frac{2\sqrt{2}\sqrt{t^2-2t+4}}{\sqrt{13t^2-24t+48}} )-\pi. \end{equation*}

In a similar way, we can find expressions for  the volume of the triangle  with vertices $P_{1235}$, $P_{1245}$ and $P_{2345},$ 
\begin{equation*}
Vol_{2} (S_2 \cap S_5) = \cos^{-1}( \frac{3}{\sqrt{14}}  ) +\cos^{-1}( -\frac{\sqrt{3}}{2}  )+\cos^{-1}( \frac{\sqrt{6}}{\sqrt{7}} )-\pi, \end{equation*}
 the volume of the triangle  with vertices  $P_{1235}$, $P_{1345}$ and $P_{2345}$, 
\begin{align*}
Vol_{2} (S_3 \cap S_5) = \cos^{-1}( \frac{t+6}{\sqrt{7}\sqrt{13t^2-24t+48} } )&+\cos^{-1}(\frac{ -4t^2+7t-12}{\sqrt{13t^2-24t+48}  \sqrt{4t^2-8t+13}}  )\\
& \hskip 1cm+\cos^{-1}(  \frac{t-1}{\sqrt{7}\sqrt{4t^2-8t+13}} )-\pi, \end{align*}
and the volume of the triangle  with vertices  $P_{1245}$, $P_{1345}$ and $P_{2345},$ 
\begin{equation*}
Vol_{2} (S_4 \cap S_5) = \cos^{-1}( \frac{\sqrt{3}}{2\sqrt{t^2-2t+4}}\  )+\cos^{-1}( 0  )+\cos^{-1}( 0 )-\pi. \end{equation*}

 By the Sch{\"a}fli formula (with $n=4$),  we get
\begin{align*}
Vol_4(S_{t=1}) & = Vol_{4}(S_{t=0}) +\frac{1}{3} \int_{0}^1  Vol_{2}(S_1 \cap S_5) d\alpha_{15}+Vol_{2}(S_2 \cap S_5) d\alpha_{25} \\
& \quad \quad \quad \quad \quad+Vol_{2}(S_3 \cap S_5) d\alpha_{35} +Vol_{2}(S_4 \cap S_5) d\alpha_{45} \\
& =  Vol_{4}(S_{t=0}) +\frac{1}{3}\left[I_1+I_2+I_3+I_4 \right]
 \end{align*}
 where, from Step 2, $Vol_{4}(S_{t=0}) = 0.01342486879$,  and
 
\begin{align*} 
I_1 &= \int^1_0 \left[ \cos^{-1}( \frac{4-t}{\sqrt{2}\sqrt{13t^2-24t+48}} )+\cos^{-1}( \frac{t-4}{4\sqrt{t^2-2t+4}} ) \right. \\
& \left. +\cos^{-1}( \frac{2\sqrt{2}\sqrt{t^2-2t+4}}{\sqrt{13t^2-24t+48}} )-\pi \right]  \frac{ \sqrt{3}t }{ (t^2-2t+4) \sqrt{5t^2-8t+16}  } dt \\
& 
\approx 0.05475823646,\\
 I_2 &=\int^1_0 \left[ \cos^{-1}( \frac{3}{\sqrt{14}}  )+\cos^{-1}( -\frac{\sqrt{3}}{2}  )+\cos^{-1}( \frac{\sqrt{6}}{\sqrt{7}} )-\pi  \right]  \frac{- \sqrt{3} }{ (t^2-2t+4) }  dt \\
& \approx -0.2641661714,   \\
 I_3 &=\int^1_0  \left[  \cos^{-1}( \frac{t+6}{\sqrt{7}\sqrt{13t^2-24t+48} } )+\cos^{-1}(\frac{ -4t^2+7t-12}{\sqrt{13t^2-24t+48}  \sqrt{4t^2-8t+13}}  )  \right. \\
& \left. +\cos^{-1}(  \frac{t-1}{\sqrt{7}\sqrt{4t^2-8t+13}} )-\pi   \right] \frac{ \sqrt{2}(1-t)}{(t^2-2t+4)\sqrt{3t^2-6t+10}} dt \\
 & 
 \approx 0.1238208804,\\
 I_4 & =\int^1_0 \left[  \cos^{-1}( \frac{\sqrt{3}}{2\sqrt{t^2-2t+4}}\  )+\cos^{-1}( 0  )+\cos^{-1}( 0 )-\pi  \right]  \frac{ \sqrt{3}(t-1)}{(t^2-2t+4)\sqrt{4t^2-8t+13}} dt \\
&  \approx -0.08221263827. 
 \end{align*}
 Thus, the probability is approximately
\begin{align*}
 0.01342486879& +\frac{1}{3((8/3)\pi^2)} \left[0.05475823646 -0.2641661714 \right.\\
& \hskip 1cm \left.+0.1238208804-0.08221263827 \right]=0.01129966085.
 \end{align*}

\vskip 2mm 
 
\noindent \textbf{Step 4} Finding $P(A\succ B\succ C, B \succ_C  A, BSc(A)>Bc(B), \mbox{ and }BSc(C)>BSc(B))$. 

Subtracting the probability in Step 3 from the probability in Step 2, we obtain
\begin{align*}
P(A\succ B\succ &C, B \succ_C  A, BSc(A)>Bc(B), \mbox{ and }BSc(C)>BSc(B)) \\
& = P(A\succ B\succ C, B \succ_C A, BSc(A)>BSc(B)) \\
& \hskip 1cm -P(A\succ B\succ C, B \succ_C A, BSc(A)>BSc(B), \mbox{ and } BSc(B)>BSc(C)))\\
& \approx 0.01342486879-0.01129966085=0.00212520794.
\end{align*}

\noindent \textbf{Step 5} Sketching the conclusion of the proof of (i). From Step 3, the probability that $B$ is the IRV winner and the Borda loser is approximately $0.00212520794$. The probability that $A$ is the IRV winner and the Borda loser is determined similarly and is approximately equal to  $0.00255956339$.  Thus, the probability that the IRV winner is the Borda loser is equal to
$$0.00212520794 + 0.00255956339 = 0.00468477133.$$
Since the order of the first-place votes $A \succ B \succ C$ was arbitrary, we obtain a final answer by multiplying by 6 to get
$$6 \times 0.00468477133 = 0.02810862798$$
or roughly 2.811\%.

\noindent \textbf{ Proof of (ii)} The proof of (ii) is entirely analogous to the proof of (i). Note that since no candidate achieves a majority, the Bucklin loser is the candidate with the smallest number of first and second-place votes. This is equivalent to the candidate with the most last-place votes. 

Assuming that the order of first-place votes is $A \succ B\succ C$, we use a similar argument to  find that the probability that $B$ is the IRV winner and the Bucklin loser is approximately 0.0024298360 and the probability that $A$ is the IRV winner and the Bucklin loser is approximately 0.0095329879. Thus, the probability that the IRV winner is the Bucklin loser is
$$6 \times (0.0024298360+ 0.0095329879) = 0.0717769434,$$
or approximately 7.178\%.

\end{document}